\documentclass[11pt]{article}
\pdfoutput=1 

\usepackage[utf8]{inputenc}
\usepackage[margin=2.5cm]{geometry}
\usepackage{amsfonts}
\usepackage{amsmath}
\usepackage{amssymb}
\usepackage{amsthm}
\usepackage{mathtools}
\usepackage{stmaryrd}
\usepackage{bold-extra} 
\usepackage{float}
\usepackage{microtype}
\usepackage{xspace}
\usepackage{bm}
\usepackage{doi}
\usepackage{hyperref}
\usepackage[dvipsnames]{xcolor}
\usepackage[capitalise, nameinlink, noabbrev]{cleveref}
\usepackage{tikz}
\usepackage{graphicx}
\usepackage[font=small,labelfont=bf]{caption} 
\usepackage{enumerate}
\usepackage{rotating}[figuresleft] 
\usepackage{stmaryrd}

\hypersetup{
	colorlinks=true,
	urlcolor=MidnightBlue,
	linkcolor=MidnightBlue,
	citecolor=MidnightBlue,
	unicode
}
\crefname{equation}{}{}
\crefname{claim}{Claim}{Claims}
\allowdisplaybreaks

\theoremstyle{plain}
\newtheorem{theorem}{Theorem}[section]
\newtheorem{lemma}[theorem]{Lemma}

\newtheorem{claim}[theorem]{Claim}
\newtheorem{informaltheorem}{Informal Theorem}

\theoremstyle{remark}
\newtheorem{remark}[theorem]{Remark}

\theoremstyle{definition}

\newcommand{\newclass}[2]{\newcommand{#1}{{\text{\upshape\sffamily #2}}\xspace}}
\newclass{\FP}{FP}
\newclass{\NP}{NP}
\newclass{\TFNP}{TFNP}
\newclass{\PLS}{PLS}
\newclass{\PPA}{PPA}
\newclass{\PPAD}{PPAD}
\newclass{\EOPL}{EOPL}
\newclass{\UEOPL}{UEOPL}

\newcommand{\newprob}[2]{\newcommand{#1}{{\text{\upshape\scshape #2}}\xspace}}
\newprob{\ueopl}{UEoPL}
\newprob{\eol}{EoL}
\newprob{\sod}{SoD}
\newprob{\Bracket}{Bracket}
\newprob{\BracketStar}{Bracket*}
\newprob{\CakeCutting}{CakeCutting}
\newprob{\JordanCurve}{JordanCurve}

\renewcommand{\epsilon}{\varepsilon}
\newcommand{\eps}{\epsilon}

\newcommand{\ZO}{\{0,1\}}

\newcommand{\cT}{\mathcal{T}}

\newcommand{\itemcomment}[1]{\hfill {\small #1}}

\newcommand{\checkR}{C^\text{R}}
\newcommand{\checkB}{C^\text{B}}
\DeclareMathOperator{\nextT}{next}

\newcommand{\poly}{\textup{\textsf{poly}}}

\title{On Cutting Cakes and Crossing Curves}

\author{%
  \makebox[.23\linewidth]{Alexandros Hollender}\\
  \textsl{University of Oxford}%
  \and
  \makebox[.23\linewidth]{Gilbert Maystre\footnote{Work done while at EPFL}}\\
  \textsl{Oracle Corporation}%
  \and \makebox[.23\linewidth]{Kilian Risse\footnote{Supported by the
      Swiss National Science Foundation
      Postdoc.Mobility fellowship \mbox{P500-2\_235298}. Part of this work done while at EPFL.}}\\
  \textsl{Lund University}%
}

\date{}

\begin{document}
\maketitle

\begin{abstract}
  We consider the classic envy-free cake-cutting problem where the
  goal is to cut and allocate a divisible resource among a set of
  agents in a way that avoids any envy between them. When the agents'
  valuation functions are continuous and nonnegative, an envy-free solution is
  guaranteed to exist where each agent is allocated a contiguous piece
  of the resource. Such a solution can be efficiently computed using the standard cut-and-choose algorithm for two agents, but the problem is known to be hard when there are at least four agents. The setting with three agents has remained open.

  We show that the problem remains intractable for three agents. We
  obtain this result by uncovering a novel connection between
  cake-cutting and a computational problem corresponding to the Jordan
  curve theorem, introduced by Adler, Daskalakis, and
  Demaine~(2016). As our main technical contribution, we provide the
  first lower bounds for the Jordan curve problem in the form of a query lower bound as well as hardness for the class $\UEOPL$, a subclass of $\PPAD$ containing notoriously challenging problems such as Simple Stochastic Games and the P\nobreakdash-matrix Linear Complementarity Problem.
\end{abstract}

\pagenumbering{roman}
\thispagestyle{empty}

\newpage
\pagenumbering{arabic}
\setcounter{page}{1}

\section{Introduction}

The cake-cutting problem, first studied by Steinhaus~\cite{Steinhaus1948-fair-division}, is perhaps the most fundamental problem in fair division~\cite{brams1996fair,robertson1998cake,procaccia2013cake}. We are given a divisible resource, henceforth called the \emph{cake} and modeled as the $[0,1]$ interval, and are asked to divide it in some fair manner among $n$ agents. Each of those $n$ agents has its own opinion about which parts of the cake are more or less valuable, and this is modeled by an agent-specific valuation function $v_i$ that assigns a real non-negative value to each interval of the cake. The most natural fairness notion is \emph{envy-freeness}, which requires that no agent envies the piece allocated to some other agent.

Perhaps surprisingly, as long as the valuation functions are continuous, an envy-free allocation is always guaranteed to exist~\cite{Stromquist80-existence,Woodall80-existence,Su99-rental-harmony}. Furthermore, this division is \emph{connected} (or \emph{contiguous}), meaning that each agent is allocated a single interval of the cake (as opposed to a union of multiple disjoint intervals). Unfortunately, the proof of this existence result relies on Sperner's lemma and does not yield an efficient algorithm for the problem. The state of the art for the computation of connected envy-free divisions can be summarized as follows:\footnote{For additional related work, in particular for different models of computation or different assumptions about the valuation functions, see, e.g.,~\cite{BranzeiN22-cake-query,BranzeiN19-cake-communication}.}
\begin{itemize}
\item For two agents, the problem can be solved efficiently by the \emph{cut-and-choose} algorithm: one agent cuts the cake in half (according to its own valuation function), and the other agent chooses its favorite piece, leaving the remaining piece to the agent who performed the cut. A cut position where the two resulting pieces have the same value for the agent can be (approximately) computed by binary search.
\item For four or more agents, the problem is intractable~\cite{HollenderR23-cake-four}. Namely, it requires a number of queries that is exponential in the bit complexity of the precision parameter, and is also \PPAD-complete in the standard Turing machine model.
\item If one assumes that the valuations are monotone,\footnote{A valuation $v$ is \emph{monotone}, if $v(A) \leq v(B)$ whenever $A \subseteq B$.} then the problem can be solved efficiently for up to four agents~\cite{HollenderR23-cake-four}. The complexity of monotone cake-cutting remains open for five or more agents.
\end{itemize}

Hence the main question left open for non-monotone valuations is to
determine the query-complexity of envy-free divisions with three
agents. In this work, we seek to close this gap in our understanding;
our main result is as follows.

\begin{informaltheorem}\label{inf:CC}
  Envy-free cake-cutting is intractable, even for three agents with
  identical (non-monotone) valuation functions. Namely, finding an
  $\eps$-envy-free division requires $\poly(1/\eps)$ queries
  and the problem is \UEOPL-hard in the standard Turing machine model.
\end{informaltheorem}

Attempts to extend techniques from prior work to prove hardness for three agents fail, because of the additional structure of the problem in that case. Indeed, when moving from four agents to three agents, the dimension of the space of solutions changes from three to two,\footnote{When there are three agents, we are looking to cut the cake in two positions, hence the space of potential solutions is two-dimensional.} and the two-dimensional space for this particular problem does not seem to allow a reduction from the canonical \PPAD-complete problem, which was used in prior work.

In order to bypass this obstacle, we consider a purely two-dimensional problem and establish a connection with our cake-cutting problem. The problem in question is a computational version of the Jordan curve theorem. Roughly, we are given two continuous curves in the square $[0,1]^2$, a red curve and a blue curve, such that (i) the red curve starts at $(0,0)$ and ends at $(1,1)$, and (ii) the blue curve starts at $(0,1)$ and ends at $(1,0)$. Clearly, these two curves must cross at some point, and the computational challenge is to find such a crossing point. This problem was first defined by Adler, Daskalakis, and Demaine~\cite{ADD16}, who proved that it lies in \PPAD and asked the question of determining its complexity.\footnote{The problem is called ``Crossing Curves'' in their paper.} As our main conceptual contribution, we establish the following connection.

\begin{informaltheorem}\label{inf:JC2CC}
The Jordan curve problem reduces to the envy-free cake-cutting problem with three agents with identical (non-monotone) valuation functions.
\end{informaltheorem}

Thus, in order to obtain lower bound results for our cake-cutting problem, it suffices to prove lower bounds for the Jordan curve problem. Our main technical contribution is to prove the first such lower bound result for the Jordan curve problem.

\begin{informaltheorem}\label{inf:JC}
The Jordan curve problem is \UEOPL-hard.
\end{informaltheorem}

Our main result in \cref{inf:CC} thus follows as a corollary\footnote{In particular, the query lower bound follows from the fact that the canonical \UEOPL-complete problem admits an exponential lower bound and our reductions are black-box.} of \cref{inf:JC2CC} and \cref{inf:JC}. Our current understanding of the complexity of these problems is illustrated in \cref{figure:class_diagram}.

The class \UEOPL, which stands for \emph{Unique End of Potential
  Line}, is a subclass of \PPAD which contains notoriously challenging
problems such as solving Simple Stochastic Games and the
P\nobreakdash-matrix Linear Complementarity
Problem~\cite{Fearnley21}. These problems have evaded attempts at
polynomial\nobreakdash-time algorithms for decades. Our
\UEOPL-hardness results for the cake cutting and Jordan curve problems
thus provide compelling evidence that the problems are unlikely to be
solvable in polynomial time, barring some truly breakthrough
algorithmic result. Further evidence of the intractability of
\UEOPL-hard problems is provided by a series of works which prove that
the class is hard under various cryptographic
assumptions~\cite{HubacekY2017-CLS,ChoudhuriHKPRR19-Fiat-Shamir,JawaleKKZ21-PPAD-LWE}.\footnote{These
  works usually state their results for the more widely recognizable
  class \PPAD, but they hold for a problem called ``Sink of Verifiable
  Line'', which lies in \UEOPL.}

Although our results already establish intractability for our two
problems of interest, there remains
a gap between the \UEOPL-hardness and \PPAD-membership. Understanding
the precise complexity of these problems is a major challenge. Even
determining whether the Jordan curve problem lies in \PLS or showing a
black-box separation with \PLS is a highly non-trivial
question. Finally, let us mention the intriguing possibility that the
problems might not be complete for any of the existing classes, but
for some new class or classes yet to be defined.

\begin{figure}
    \centering
    \includegraphics[scale=1]{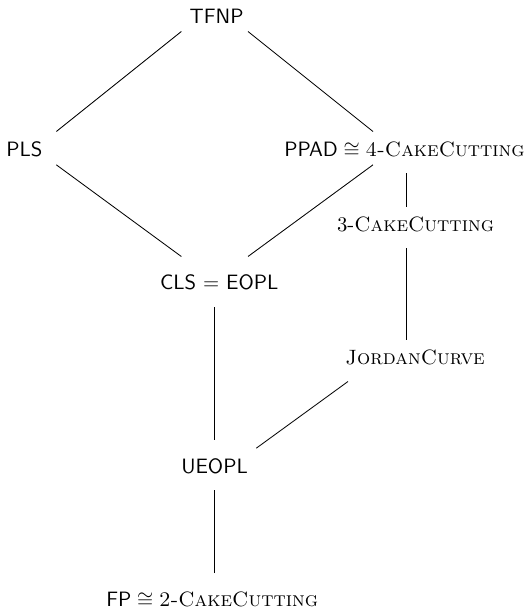}
    \caption{Class diagram. Here $k$-\CakeCutting refers to the cake-cutting problem with $k$ agents.}
    \label{figure:class_diagram}
\end{figure}

\section{Preliminaries}

\subsection{Cake cutting}

In cake-cutting, there is a divisible resource, called ``the cake'', which is represented as the interval $[0,1]$. Furthermore, there are $n$ agents, each with its own valuation function assigning a value to each piece (i.e., interval) of the cake. More formally, agent $i \in [n]$ has a valuation function $v_i: [0,1]^2 \to [0,1]$, where $v_i(a,b)$ represents the value that agent $i$ has for interval $[a,b]$. We require that $v_i(a,b) = 0$ whenever $b \leq a$, i.e., an empty piece of cake should have value zero.

The valuations are always assumed to be continuous, and furthermore, for computational purposes, we will assume that they are Lipschitz-continuous. A valuation $v$ is Lipschitz-continuous with Lipschitz constant $L$, if, for all $a,b,a',b' \in [0,1]$ with $a \leq b$ and $a' \leq b'$:
$$|v(a,b) - v(a',b')| \leq L (|a-a'| + |b-b'|).$$

We are interested in dividing the cake into $n$ pieces and assigning each piece to one of the $n$ agents. Formally, we wish to partition the cake $[0,1]$ into $n$ intervals $A_1, \dots, A_n$ such that $A_i$ is assigned to agent $i$.

\paragraph*{\textbf{Envy-freeness.}}
An allocation $(A_1, \dots, A_n)$ is \emph{envy-free} if, for all agents $i$, we have $v_i(A_i) \geq v_i(A_j)$ for all $j$. For $\varepsilon \in [0,1]$, we say that the allocation is \emph{$\varepsilon$-envy-free} if, for all agents $i$, we have $v_i(A_i) \geq v_i(A_j) - \varepsilon$ for all $j$. It is known that an envy-free allocation always exists when the agents are \emph{hungry}, i.e., they (weakly) prefer any non-empty piece to an empty piece~\cite{Stromquist80-existence,Woodall80-existence,Su99-rental-harmony}. Since we have assumed that valuations $v_i$ are always nonnegative and furthermore assign value zero to empty pieces, the hungriness condition is automatically satisfied and an envy-free allocation is guaranteed to exist.

\paragraph*{\textbf{Normalization.}} Without loss of generality, we
assume that the valuation functions are $1$-Lipschitz-continuous,
i.e., $L=1$. If $L > 1$, we can replace $v_i$ by $v_i/L$, and
$\varepsilon$ by $\varepsilon/L$. This normalization allows us to
express bounds on the number of queries in terms of $\eps$ only.

\paragraph*{\textbf{Query complexity.}}
In the query model, we can query the valuation functions of the agents. A query consists of the endpoints of an interval $[x,y]$, and the agent responds with its value for that interval, i.e., $v_i(x,y)$. The running time of an algorithm is the number of queries it makes. It is known that no finite algorithm exists for finding \emph{exact} solutions (i.e., with $\eps = 0$)~\cite{stromquist2008envy}.\footnote{This impossibility result holds even if one also allows the use of so-called \emph{cut queries}, which are commonly included in the Robertson-Webb model~\cite{robertson1998cake,WoegingerS07-cake}. For finding $\eps$-envy-free allocations, it is known that cut queries can be simulated by a logarithmic number of standard evaluation queries, and thus we do not need to include them in our model.}

This motivates the study of algorithms for finding $\eps$-envy-free allocations where we allow the number of queries to depend on $\eps > 0$. For a constant number of agents, the problem of computing an $\eps$-envy-free allocation can be solved using $\poly(1/\eps)$ queries by brute force~\cite{BranzeiN22-cake-query}. We say that an algorithm is efficient if it uses $\poly(\log(1/\eps))$ queries instead.

\paragraph*{\textbf{Computational complexity.}}
In the standard Turing machine model, the valuations are given to us in the input. An algorithm is efficient, if it runs in polynomial time in the size of the representation of the valuations and in $\log(1/\eps)$. For example, the valuations can be given as well-behaved arithmetic circuits \cite{Fearnley22} or as Turing machines (together with a polynomial upper bound on their running time). In this model, the problem is a total \NP search problem, i.e., it lies in the class \TFNP. Furthermore, it is known to lie in the subclass \PPAD of \TFNP \cite{DQS12}.

\subsection{The $\JordanCurve$ problem}
\label{sec:jordan-curve}

The input to a $\JordanCurve$ instance is a pair of circuits
$R, B: \ZO^{m} \to \ZO^{2d}$ which are interpreted as defining two
curves with timescale $[M]$ moving on a grid with dimension
$D \times D$ (where $D \coloneqq 2^d$ and $M \coloneqq 2^m$). More
precisely, $R$ represents a red curve, where for each timestamp
$t \in [M]$ (which is naturally identified with $\ZO^m$), the position
of the curve is given as a coordinate $(x, y)$ with
$x,y \in \llbracket 0, D-1 \rrbracket \coloneqq \{0,\, \dots, \,
D-1\}$ (identified with $\ZO^{d}$). The blue curve $B$ is defined
similarly. We take as convention that the coordinate $(0, 0)$
corresponds to the bottom-left corner and $(D-1, D-1)$ to the
top-right one. If the red and blue curves start and end in opposite
corners and are continuous, a solution is any pair of timestamps that
witness an intersection of the blue and the red curve. More precisely,
the following are all acceptable solutions:
\begin{enumerate}
\item $(i, j)$ if $R(i) = B(j)$ \itemcomment{red-blue crossing}%
\item $1$ if $R(1) \neq (0, 0)$ or $R(M) \neq (D-1, D-1)$
  \itemcomment{red wrong start/end}\label{jc:2}%
\item $1$ if $B(1) \neq (0, D-1)$ or $B(M) \neq (D-1, 0)$
  \itemcomment{blue wrong start/end}%
\item $i$ if $\|R(i) - R(i+1)\|_1 > 1$ \itemcomment{red discontinuous}%
\item $j$ if $\|B(j) - B(j+1)\|_1 > 1$ \itemcomment{blue
    discontinuous}\label{jc:5}%
\end{enumerate}
Let us now state a couple of observations about allowable range of
parameters and solutions types.
\paragraph{Range of parameters.} To make the $\JordanCurve$ problem
interesting, one needs~$M$ to be somewhat larger than~$2D$. Indeed, in
the extreme case of $M = 2D$, the red and blue curve are constrained
to be monotone and a solution is hence easy to find \cite{ADD16}. The
reduction of \Cref{theorem:ueopl_leq_jc} shows that it is \UEOPL-hard
to solve a $\JordanCurve$ instance in the regime~$M = D^C$ for
constant $C > 1$. We observe however that this readily implies
hardness for the parameter range~$M = 2D + D^\eps$ for any
constant~$\eps > 0$.

Indeed, let~$D' = D^{\eps/C}$ and consider a grid of size~$D' \times D'$ placed at the
center of the~$D \times D$ grid. Define now the red and blue curves on the $D \times D$ grid as first moving diagonally from their respective corners to the corresponding corners of the inner $D' \times D'$ grid. This requires $2 \cdot (D - D')$ timestamps per curve and as such there remain $M'$ timestamps to move within the inner $D' \times D'$ grid with
\[
M' = 2D + D^\eps - 2(D - D') = 2D' + D^\eps \geq (D')^{C}.
\]
Thus, the hardness of \cref{theorem:ueopl_leq_jc} applies to the inner grid and we may conclude that $\JordanCurve$ is \UEOPL-hard in the aforementioned regime already.
\paragraph{Enforcing no syntactic solutions.} We may assume without
loss of generality that $\JordanCurve$ instances~$(R, B)$ have solution of the first kind (red-blue crossings) only. Indeed, if~$(R,B)$ is defined over a grid of
dimension~$D \times D$ and timescale~$[M]$, we can get rid of the syntactic solutions (\ref{jc:2})--(\ref{jc:5}) by constructing the instance~$(R', B')$ as follows. $R'$ and $B'$ are defined on the same $D \times D$ grid, but with increased time-domain~$M' \coloneqq 2M \cdot D$. The curves are essentially copies of~$R$ and $B$ on multiples of $2D$:
\[
  R'(t') =
  \begin{cases}
    (0, 0)
    &\quad\text{if $t' = 1$,}\\
    (D-1, D-1)
    &\quad\text{if $t' = M'$,}\\
    R(t)
    &\quad\text{if $t' = 2t \cdot D$}
  \end{cases}
    \qquad B'(t') =
  \begin{cases}
    (0, D-1)
    &\quad\text{if $t' = 1$,}\\
    (D-1, 0)
    &\quad\text{if $t' = M'$,}\\
    R(t)
    &\quad\text{if $t' = 2t \cdot D$}
  \end{cases}
\]
The positions of the curves for the other timestamps are defined by interpolating (by a shortest path) between the fixed points above. For instance, for
timestamps~$t \in \llbracket 2t \cdot D, (2t+1) \cdot D\rrbracket$ the red curve~$R'$
interpolates by a shortest path from~$R(2t \cdot D)$
to~$R\big(2 \cdot (t+1) \cdot D\big)$. Note that a solution~$(t_1', t_2')$ of~$(R', B')$ is readily
mapped back to~$(R, B)$ by computing the corresponding
timestamps~$t_1, t_2$, to then either output a syntactic solution (in
case the curves~$B,R$ are discontinuous or have a wrong start/end), or
otherwise output the intersection~$(t_1, t_2)$. See
\cref{figure:syntactic_jc} for an illustration.

Finally let us remark that the above construction can also be used to
show that the \textsc{CrossingCurves} problem, as defined in
\cite{ADD16}, is equivalent to the \textsc{JordanCurve} problem.

\begin{figure}
    \centering
    \includegraphics[scale=1]{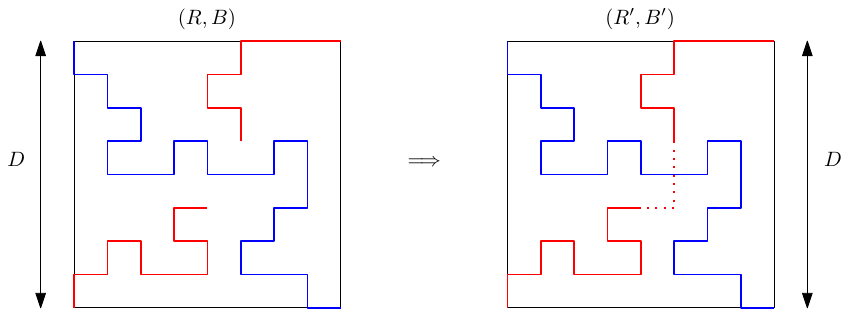}
    \caption{Completing a \JordanCurve instance to remove syntactic
      solutions.}
    \label{figure:syntactic_jc}
\end{figure}

\subsection{Unique end of potential line}

The class \UEOPL is defined by its canonical complete problem, the \textsc{Unique-End-of-Potential-Line} ($\ueopl$) problem.
We use the grid-like definition of $\ueopl$
\cite{Fearnley21,Goos24_separations}. An instance consists of two
circuits~$S, P\colon \ZO^{2n} \to \ZO^n \cup \{\bot\}$, termed the
\emph{successor} and the \emph{predecessor} circuit, that are
interpreted as a series of paths on the~$N \times N$ grid
for~$N \coloneqq 2^n$. The output of the successor
circuit~$S(i, j) = j_s$ is interpreted as node~$(i,j)$ having
successor~$(i + 1, j_s)$ while the output of the
predecessor~$P(i,j) = j_p$ is interpreted as node~$(i,j)$ having
predecessor~$(i-1,j_p)$. For~$i \neq N$, we say that node~$(i, j)$ is
\emph{active} if~$S(i, j) = j_s \neq \bot$ and~$P(i+1, j_s) = j$. We
further say that a node~$(N, j)$ is active if~$S(N, j) \neq \bot$. A
solution to the $\ueopl$ instance~$(S,P)$ is either an end of a line
or two parallel lines, that is,
\begin{enumerate}
    \item $(1, 1)$ if it is inactive
    \item $(N, j)$ if it is active \itemcomment{sink}%
    \item $(i, j)$ if it has an active predecessor but is inactive
      \itemcomment{proper sink}%
    \item $(i, j) \neq (1,1)$ if it has no predecessor but is active
      \itemcomment{proper source}%
    \item $(i, j)$, $(i, j')$ if both are active and $j \neq j'$
      \itemcomment{parallel lines}%
\end{enumerate}

\paragraph{Enforcing no syntactic solutions.} Following \cite{Goos24_collapse}, we may assume without
loss of generality that the predecessor and the successor circuits are
consistent, that~$S(N,j) = \bot$ for all~$j \in [N]$, and that the
node~$(1, 1)$ is active.

\section{A Connection Between $\JordanCurve$ and $\CakeCutting$}

Our main result is the following theorem.

\begin{theorem}\label{thm:cake-hard}
It is \UEOPL-hard to compute a connected $\eps$-envy-free division for three agents with identical (non-monotone) valuations, where $\eps$ is given in binary. Furthermore, in the query model, the problem requires $\poly(1/\eps)$ queries.
\end{theorem}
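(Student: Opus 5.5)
The plan is to compose two reductions. First I will invoke \cref{theorem:ueopl_leq_jc}, which shows that \JordanCurve is \UEOPL-hard for $M = D^C$. By the discussion in \cref{sec:jordan-curve}, I may also assume the instance has only red-blue crossing solutions. It then remains to give a black-box polynomial-time reduction from \JordanCurve to $\eps$-envy-free cake cutting with three identical agents, with $\eps = 1/\poly(D)$. This is \cref{inf:JC2CC}. Both reductions are black-box: each query to $v$ is answered with $O(1)$ evaluations of $R$ and $B$. Hence the exponential query lower bound for \ueopl transfers to a $\poly(1/\eps)$ query lower bound, and \UEOPL-hardness follows in the Turing machine model.

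For the cake-cutting side, identical valuations simplify the solution concept. With cuts $0 \le x \le y \le 1$, a division is $\eps$-envy-free for every assignment exactly when the three values $v(0,x)$, $v(x,y)$ and $v(y,1)$ are within $\eps$ of each other. So a solution is a near-common zero of the two functions
\[
F_1(x,y) = v(x,y) - v(0,x), \qquad F_2(x,y) = v(x,y) - v(y,1)
\]
on the triangle $\{x \le y\}$.

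I will pick a square $Q$ in the interior of the triangle. I will place on it a scaled copy of the $D \times D$ grid, with mesh width $h = \Theta(1/D)$, and define $v$ in three regions.
\begin{itemize}
\item On the boundary strips $\{(0,x)\}$ and $\{(y,1)\}$, which play the roles of $v(0,\cdot)$ and $v(\cdot,1)$, I will use simple explicit Lipschitz profiles $g(x)$ and $h(y)$.
\item Inside $Q$, I will define $v(x,y)$ by interpolating values assigned at grid points. The sign of $v - g$ should flip across the red curve, and the sign of $v - h$ should flip across the blue curve.
\item Outside $Q$, I will choose $v$ so that neither $F_1$ nor $F_2$ has any near-zero, or at least no common near-zero.
\end{itemize}

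To make the sign conditions computable, I will thicken each curve into a tube of width about $h/3$ along its polyline. I will then define $v$ locally, in each grid cell, from the $O(1)$ curve segments passing nearby; these are found by querying $R$ and $B$ at nearby timestamps, which only works if a grid cell can determine locally which time steps visit it. I will arrange the corner conditions so that $Q$ and the triangle boundary force $\{F_1 \approx 0\}$ to behave like a path from the lower-left to the upper-right side of $Q$, and $\{F_2 \approx 0\}$ like a path between the other two corners. Any approximate common zero then lies within $O(h)$ of both tubes. Since only crossing solutions exist, such a point can be decoded into timestamps $(i,j)$ with $R(i) = B(j)$. Lipschitz constants are $O(D)$, which after normalization gives $\eps = 1/\poly(D)$.

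The main obstacle is the coupling identity
\[
F_1 - F_2 = v(y,1) - v(0,x) = h(y) - g(x),
\]
which holds independently of the interior design. The two zero sets are not independent: along any common zero we must have $g(x) = h(y)$. Naive choices such as $g(x) = x$ and $h(y) = 1-y$ therefore confine solutions to a fixed line, and the curve crossing would not be encoded. My first attempt to get around this is to let the middle values carry the two-dimensional information while making $g$ and $h$ \emph{non-monotone, oscillating} profiles whose level-set correspondence $\{g(x) = h(y)\}$ is itself a dense family of curves. The red and blue tubes would then be routed along the product structure, for example by a coordinate change that maps grid cells to crossings of these level curves.

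If that fails, I will instead re-embed the grid so that one curve is encoded through a piece's endpoint value and the other through the middle piece. The goal in both cases is that the two equations become genuinely independent. I expect this gadget design, rather than the complexity bookkeeping, to be where most of the work lies.
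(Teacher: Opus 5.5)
\textbf{Gap: the \JordanCurve-to-cake-cutting reduction is never constructed.} Your outer structure matches the paper's: compose \cref{theorem:ueopl_leq_jc} with a black-box reduction from \JordanCurve to three identical agents, and carry the query bound through as in \cref{rem:query-lower-bound}. But the middle reduction, which is all of \cref{thm:JC2CC}, is missing, and the spatial encoding you sketch cannot be completed. There are two reasons.

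First, by the identity you yourself found, the common zeros of $(F_1,F_2)$ are exactly $\{F_1=0\}\cap\{g(x)=h(y)\}$. If $g,h$ do not depend on the instance, every point where your red tube meets the fixed set $\{g=h\}$ is an exact envy-free division. So the location of solutions is determined by the red tube and an instance-independent set, and the blue tube plays no role. Making $\{g=h\}$ a dense family of curves only makes this worse: the red tube will generally meet it in cells that contain no red--blue crossing, creating spurious solutions.

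Second, evaluating $v$ at a point of $Q$ that represents a grid cell requires knowing which timestamps of $R$ and $B$ visit that cell. That amounts to inverting the circuits $R,B$, which cannot be done with a small number of evaluations. So your claim that each query is answered with $O(1)$ evaluations contradicts your own design. Neither the oscillating profiles nor the vague ``re-embedding'' fallback addresses either obstruction.

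\textbf{Missing idea: let the two cuts be the time parameters of the two curves, not spatial coordinates.} Then the end-piece profiles $g,h$ themselves carry the instance. For $c_1$ near $1/3$ and $c_2$ near $2/3$ the paper sets
\[
v(0,c_1)=\widehat{R}_1(c_1)+\widehat{R}_2(c_1),\qquad v(c_1,c_2)=\widehat{R}_1(c_1)+\widehat{B}_2(c_2),\qquad v(c_2,1)=\widehat{B}_1(c_2)+\widehat{B}_2(c_2).
\]
This gives $F_1=\widehat{B}_2(c_2)-\widehat{R}_2(c_1)$ and $F_2=\widehat{R}_1(c_1)-\widehat{B}_1(c_2)$, with three consequences:
\begin{itemize}
\item common zeros are exactly curve crossings;
\item your coupling identity becomes the harmless consequence $(\widehat{R}_1+\widehat{R}_2)(c_1)=(\widehat{B}_1+\widehat{B}_2)(c_2)$;
\item each query needs only a constant number of evaluations of $\widehat{R},\widehat{B}$, via interpolation on a $1/N$-grid.
\end{itemize}
To make this globally consistent, the curves are first modified (\cref{clm:modif-properties}). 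They follow $(1/2-t,-1/2+2t)$ and $(3/2-2t,-1/2+t)$ respectively, except for a $1/m$-scaled copy of the original instance near $t=1/3$, resp.\ $t=2/3$. Then all three expressions are within $16/m$ of the interval length, and setting $v(a,b)=b-a$ elsewhere keeps $v$ continuous. Finally, $\eps$-envy-freeness forces the cuts near $(1/3,2/3)$ (\cref{clm:cuts-in-regions}), which is exactly where the encoding is active.
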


This theorem follows from \cref{thm:JC2CC} stated and proved below, together with the black-box reduction from \ueopl to \JordanCurve provided in the next section (\cref{theorem:ueopl_leq_jc}). See \cref{rem:query-lower-bound} at the end of the proof for more details.

\begin{theorem}\label{thm:JC2CC}
There is a polynomial-time black-box reduction from $\JordanCurve$ to $\CakeCutting$ with three identical agents.
\end{theorem}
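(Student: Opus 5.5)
We reduce to the identical-valuation setting, where an allocation $0\le x\le y\le 1$ is envy-free exactly when $v(0,x)=v(x,y)=v(y,1)$, and $\eps$-envy-free when these agree up to $\eps$. So it suffices to build, from a \JordanCurve instance $(R,B)$, a single $1$-Lipschitz valuation $v$ with three properties: approximate solutions of this pair of equations exist only near points that decode to red-blue crossings, every value of $v$ is computable with $O(1)$ oracle calls to $R$ and $B$, and $\eps=1/\poly(D,M)$.

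By the preliminaries we may assume the instance has only red-blue crossing solutions. The free parameters of $v$ are the following.
\begin{itemize}
\item The two one-dimensional boundary profiles $\alpha(x)\coloneqq v(0,x)$ and $\beta(y)\coloneqq v(y,1)$.
\item The two-dimensional middle function $h(x,y)\coloneqq v(x,y)$.
\end{itemize}
Lipschitz-continuity only ties $h$ to $\alpha,\beta$ near the boundary lines $x=0$ and $y=1$, which we leave as thin buffer strips.

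The key difficulty is that one equation, $\alpha(x)=\beta(y)$, involves only one-dimensional data. If $\alpha,\beta$ were monotone, its solution set would be a single monotone curve, and (as for monotone curves in \cite{ADD16}) the problem would be easy. We therefore make $\alpha$ and $\beta$ zigzag functions.
\begin{itemize}
\item Split $[1/3,1/2]$ into $\Theta(D)$ blocks on which $\alpha$ alternates between increasing and decreasing linear pieces with slopes $\pm\Theta(1)$ and a common range.
\item Do the same for $\beta$ on $[1/2,2/3]$.
\end{itemize}
The level set $L=\{\alpha(x)=\beta(y)\}$ is then the union of the two diagonals in every cell of a $\Theta(D)\times\Theta(D)$ grid, that is, a lattice-like network of segments. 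Each grid cell of $L$ will stand for one grid point of the \JordanCurve instance. Outside this region we make $\alpha$ and $\beta$ steep and monotone, so that no other envy-free points arise.

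Next we design $h$ so that $h-\alpha$ changes sign along $L$ in a controlled way. The intended picture is the following.
\begin{itemize}
\item The red curve selects, in each visited cell, which diagonal of $L$ is ``used'', effectively a cross-over gadget. This turns the red curve into one long path $\Gamma$ inside $L$ from a fixed start to a fixed end.
\item We set $h-\alpha>0$ on one side of the blue curve and $h-\alpha<0$ on the other side.
\end{itemize}
Concretely, $h(x,y)=\alpha(x)+\delta\cdot s(x,y)$, where $s$ is a piecewise-linear interpolation of a $\pm1$ labeling of the cells given by which side of the blue curve they lie on. Since the blue curve runs from $(0,D-1)$ to $(D-1,0)$, whereas $\Gamma$ goes from $(0,0)$ to $(D-1,D-1)$, the function $s$ has opposite signs at the two ends of $\Gamma$ and must vanish somewhere on it. Decoding such a zero, by reading off the cell and the nearby timestamps, must yield a red-blue crossing.

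The side of the blue curve is a global quantity and cannot be evaluated locally. We handle this the same way the red curve is handled: in the second direction we use the blue curve's own local orientation as a local sign field, and we arrange that zeros of $s$ occur only where blue meets the red path.

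The main obstacle is making every envy-free point of $v$ decodable while keeping $v$ local and Lipschitz. Exact envy-freeness is equivalent to lying on $L$ with $h=\alpha$. We therefore need that every point of $L$ that is not on the encoded red path $\Gamma$ is either separated by a margin of $\Omega(\delta)$ from $h=\alpha$, or is an isolated component whose endpoints can be locally checked as syntactic junk. We would attempt this with local gadgets: in each cell, only the values of $R$ and $B$ at $O(1)$ neighbouring timestamps determine $h$, which also bounds the query cost.

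Finally, we set $\eps\ll\delta/\poly(D)$. Every query to $v$ costs $O(1)$ queries to $R$ and $B$, so the reduction is black-box and polynomial-time. Combined with \cref{theorem:ueopl_leq_jc}, this gives \cref{thm:cake-hard}.
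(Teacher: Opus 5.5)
Your proposal has a genuine gap at the decisive step. You need $h-\alpha$ to vanish on $L$ only at decodable points, with $O(1)$ queries per value of $v$, and you only say you ``would attempt'' this with unspecified local gadgets. The framework you set up works against it, for three reasons.

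\emph{Locality fails.} Your cells of $L$ stand for \emph{spatial} grid points, but the oracle gives $R$ and $B$ as functions of \emph{time}. To decide anything about the cell of a grid point $p$ (whether the red curve visits it, which diagonal it ``selects'', which side of the blue curve it is on), you need the timestamps $t$ with $R(t)=p$. That means inverting the curve, a search over up to $M$ timestamps, not $O(1)$ queries. Since the curves need not be simple, a cell can also be visited many times, so ``the diagonal used in each visited cell'' is not well defined. And a spatially located zero would still not give you the \emph{pair of timestamps} that a \JordanCurve solution consists of.

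\emph{Spurious solutions are unavoidable.} $L=\{\alpha(x)=\beta(y)\}$ depends only on the one-dimensional profiles, so it contains both diagonals of every cell whatever the red curve does; $h$ cannot delete diagonals, it only controls the second equation. With your concrete choice $h=\alpha+\delta s$, $L$ meets every cell and connects adjacent cells, so $s$ has a zero on $L$ between any two adjacent cells with opposite labels. That gives an exact envy-free division next to essentially every cell the blue curve passes, whether or not $\Gamma$ is there.

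\emph{The sign field is not constructed.} The side of the blue curve is a global winding-number quantity. Replacing it by a ``local orientation'' field whose zeros occur only where blue meets red restates what must be proved rather than constructing it.

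The missing idea, used in the paper, is to let the two cuts encode the two \emph{timestamps} instead of a point of the grid. The paper shrinks the curves by $1/m$ and splices them into linear tails. This gives $\widehat{R}$ on $[2/9,4/9]$ with $\widehat{R}(t)=(1/2-t,-1/2+2t)$ away from $1/3$, and $\widehat{B}$ on $[5/9,7/9]$ with $\widehat{B}(t)=(3/2-2t,-1/2+t)$ away from $2/3$. On a fine grid it then sets
$v(0,c_1)=\widehat{R}_1(c_1)+\widehat{R}_2(c_1)$, $v(c_1,c_2)=\widehat{R}_1(c_1)+\widehat{B}_2(c_2)$, $v(c_2,1)=\widehat{B}_1(c_2)+\widehat{B}_2(c_2)$, and $v(a,b)=b-a$ elsewhere, and interpolates. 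The construction keeps $|v(a,b)-(b-a)|\le 16/m$ everywhere, which confines every $\eps$-envy-free division to the window around $(1/3,2/3)$. Inside that window, equality of the first two pieces gives $\widehat{R}_2(c_1)\approx\widehat{B}_2(c_2)$, and equality of the last two gives $\widehat{R}_1(c_1)\approx\widehat{B}_1(c_2)$. So every solution is directly a red-blue crossing, and each value of $v$ costs $O(1)$ curve queries. No zigzag profiles, side labellings or gadgets are needed.
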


We proceed with the proof of \cref{thm:JC2CC}. In this section it is
most convenient to work with a continuous version of the
$\JordanCurve$ problem rather than the combinatorial description as
provided in \cref{sec:jordan-curve}. It is readily seen to be
equivalent to said description by simply connecting the discrete
subsequent points by straight lines.

Let $R, B: [0,1] \to [-1,1]^2$ be an instance of the continuous
$\JordanCurve$ problem, where $R$ and $B$ are $L$-Lipschitz-continuous
(with respect to the $\ell_\infty$-norm) for some given $L \geq
10$. We are also given $\eps' \in (0,1)$ and the goal is to output two
timestamps~$t_1, t_2 \in [0,1]$ such that the two curves
are~$\eps'$-close on these timestamps, that is,
$\|R(t_1) - B(t_2)\|_\infty \leq \eps'$.

\paragraph{High-level idea.}
Given the curves $R$ and $B$, our goal will be to construct a continuous valuation function $v$ over the cake $[0,1]$ that satisfies the following two desiderata:
\begin{enumerate}
    \item In any approximately envy-free division $(c_1,c_2)$ of the cake, it must be that the first cut $c_1$ lies in a small interval around $1/3$, and the second cut $c_2$ lies in a small interval around $2/3$.
    \item When the cuts $(c_1,c_2)$ lie close to $(1/3,2/3)$, the value of the first piece $v(0,c_1)$ is approximately equal to $R_1(c_1)+R_2(c_1)$, where $R_1$ and $R_2$ are the first and second coordinates of the curve $R$, respectively. Similarly, we want the value of the third piece $v(c_2,1)$ to be approximately equal to $B_1(c_2)+B_2(c_2)$. Finally, we want the value of the middle piece $v(c_1,c_2)$ to be approximately equal to $R_1(c_1)+B_2(c_2)$.
\end{enumerate}
At an approximately envy-free division the values of the three intervals must be very close to each other (because all three agents share the same valuation function $v$). As a result, we obtain that
$$R_1(c_1)+R_2(c_1) \approx R_1(c_1)+B_2(c_2) \approx B_1(c_2)+B_2(c_2)$$
which implies that
$$R_1(c_1) \approx B_1(c_2) \quad \text{ and } \quad R_2(c_1) \approx B_2(c_2)$$
i.e., $(c_1,c_2)$ yields a solution to the Jordan curve problem, as desired.

The first desiderata can be achieved by defining $v(a,b)$ to be equal to $b-a$ for any $a \leq b$. This enforces the first desiderata, but makes it trivial to find a solution, namely $(c_1,c_2) = (1/3,2/3)$. Thus, instead we only enforce that $v(a,b)$ is sufficiently close to $b-a$. This still ensures that a solution must lie close to $(1/3,2/3)$, but now we can ``hide'' its precise location using the Jordan curve problem. This requires some very careful modifications on the two curves so that the second desiderata can be achieved, but without introducing unwanted solutions when cuts move from the ``local'' regime around $1/3$ or $2/3$ to the ``global'' regime away from these two points. Indeed, we have to ensure that the valuation function $v$ is continuous everywhere and the challenge is to enforce both desiderata under this constraint. We now proceed with the formal presentation of the reduction and the proof of its correctness.

\paragraph{Modification of the $\JordanCurve$ instance.}
We begin by defining a modified Jordan curve instance. Let $m = 1000$. The modified red curve $\widehat{R}: [2/9,4/9] \to [-1/18,7/18]^2$ is defined as follows:
\begin{itemize}
\item for $t \in [2/9,3/9-2/m] \cup [3/9+2/m,4/9]$, we let $\widehat{R}(t) = (1/2 - t, -1/2 + 2t)$,
\item for $t \in [3/9-2/m, 3/9-1/m]$, we let $\widehat{R}(t)$ be the straight line going from $(1/6 + 2/m, 1/6 - 4/m)$ to $(1/6 - 1/m,1/6 - 1/m)$,
\item for $t \in [3/9+1/m, 3/9+2/m]$, we let $\widehat{R}(t)$ be the straight line going from $(1/6 + 1/m, 1/6 + 1/m)$ to $(1/6 - 2/m,1/6 + 4/m)$,
\item for $t \in [3/9 - 1/m, 3/9 + 1/m]$, we let $\widehat{R}(t) = (1/6, 1/6) + (1/m) \cdot R((t-3/9+1/m) \cdot (m/2))$.
\end{itemize}
Similarly, the modified blue curve $\widehat{B}: [5/9,7/9] \to [-1/18,7/18]^2$ is defined as follows:
\begin{itemize}
\item for $t \in [5/9,6/9-2/m] \cup [6/9+2/m,7/9]$, we let $\widehat{B}(t) = (3/2 - 2t, -1/2 + t)$,
\item for $t \in [6/9-2/m, 6/9-1/m]$, we let $\widehat{B}(t)$ be the straight line going from $(1/6 + 4/m, 1/6 - 2/m)$ to $(1/6 + 1/m,1/6 - 1/m)$,
\item for $t \in [6/9+1/m, 6/9+2/m]$, we let $\widehat{B}(t)$ be the straight line going from $(1/6 - 1/m, 1/6 + 1/m)$ to $(1/6 - 4/m,1/6 + 2/m)$,
\item for $t \in [6/9 - 1/m, 6/9 + 1/m]$, we let $\widehat{B}(t) = (1/6, 1/6) + (1/m) \cdot B((6/9 + 1/m - t) \cdot (m/2))$.
\end{itemize}
The high level idea of this embedding is illustrated in
\cref{fig:cake-high-level}. The following claim easily follows from
the construction.

\begin{figure}
  \centering
  \includegraphics{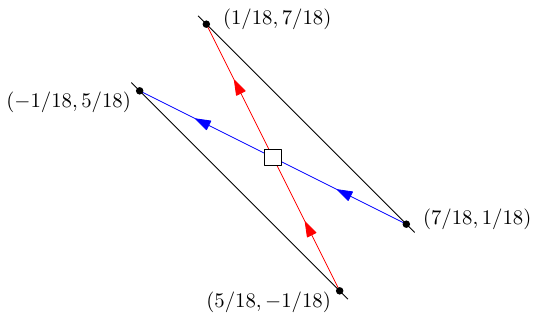}
  \caption{The modified $\JordanCurve$ instance with arrows indicating
    the passing of time}
  \label{fig:cake-high-level}
\end{figure}

\begin{claim}\label{clm:modif-properties}
The modified instance $\widehat{R}, \widehat{B}$ has the following properties:
\begin{enumerate}
\item $\widehat{R}, \widehat{B}$ are $L/2$-Lipschitz-continuous,
\item for all $t \in [2/9,4/9]$, we have $\|\widehat{R}(t) - (1/2 - t,-1/2 + 2t)\|_\infty \leq 8/m$,
\item for all $t \in [5/9,7/9]$, we have $\|\widehat{B}(t) - (3/2 - 2t, -1/2 + t)\|_\infty \leq 8/m$,
\item if $\|\widehat{R}(t_1) - \widehat{B}(t_2)\|_\infty \leq \eps'/2m$, then $t_1' := \mathsf{T}[(t_1-3/9+1/m) \cdot (m/2)]$ and $t_2' := \mathsf{T}[(6/9 + 1/m - t_2) \cdot (m/2)]$ satisfy $\|R(t_1') - B(t_2')\|_\infty \leq \eps'$,
\end{enumerate}
where $\mathsf{T}: \mathbb{R} \to [0,1]$ denotes truncation to the $[0,1]$ interval, i.e., $\mathsf{T}[x] = \min\{1, \max\{0, x\}\}$.
\end{claim}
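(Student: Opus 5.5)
The plan is to verify the four items by direct inspection of the piecewise definition. Throughout, I use the convention (implicit in the continuous version) that $R(0)=(-1,-1)$, $R(1)=(1,1)$, $B(0)=(-1,1)$ and $B(1)=(1,-1)$.

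\textbf{Continuity.} First I check that the pieces glue together at the breakpoints.
\begin{itemize}
\item For $\widehat R$ at $t=3/9-2/m$, the baseline gives $(1/2-3/9+2/m,\,-1/2+6/9-4/m)=(1/6+2/m,\,1/6-4/m)$. This is the start of the first segment.
\item At $t=3/9-1/m$, the inner piece gives $(1/6,1/6)+(1/m)R(0)=(1/6-1/m,\,1/6-1/m)$. This is the end of that segment.
\item The junctions at $3/9+1/m$ and $3/9+2/m$ work the same way, using $R(1)=(1,1)$.
\item For $\widehat B$, the inner argument $(6/9+1/m-t)(m/2)$ equals $1$ at $t=6/9-1/m$, giving $(1/6+1/m,\,1/6-1/m)$. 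It equals $0$ at $t=6/9+1/m$, giving $(1/6-1/m,\,1/6+1/m)$. Both match the adjacent segments, and the baseline endpoints also match.
\end{itemize}

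\textbf{Item 1 (Lipschitz constant).} Since $\widehat R,\widehat B$ are continuous, it suffices to bound the Lipschitz constant of each piece and take the maximum.
\begin{itemize}
\item The baselines have $\ell_\infty$-speed $2$.
\item The red connecting segments move by $(3/m,3/m)$ in time $1/m$, so they have speed $3$.
\item The blue connecting segments move by $(3/m,1/m)$ in time $1/m$, so they also have speed $3$.
\item The inner pieces are $R$ or $B$ rescaled by $1/m$ in space and $m/2$ in time, so they are $L/2$-Lipschitz.
\end{itemize}
Since $L\ge 10$, we have $\max\{2,3,L/2\}=L/2$.

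\textbf{Items 2 and 3 (closeness to the baseline).} Outside $[3/9-2/m,\,3/9+2/m]$ the deviation of $\widehat R$ from the baseline is $0$. Inside this window:
\begin{itemize}
\item the baseline point is within $\ell_\infty$-distance $2\cdot 2/m=4/m$ of $(1/6,1/6)$;
\item $\widehat R(t)$ is within $4/m$ of $(1/6,1/6)$, because the segments stay in the box of radius $4/m$ and the inner piece stays in the box of radius $1/m$.
\end{itemize}
The triangle inequality gives the bound $8/m$. The argument for $\widehat B$ is symmetric.

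\textbf{Item 4 (the main obstacle).} I split into cases according to whether $t_1$ and $t_2$ lie in the inner windows $[3/9\pm1/m]$ and $[6/9\pm1/m]$.

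\emph{Both inner.} Then $\widehat R(t_1)-\widehat B(t_2)=(1/m)\bigl(R(t_1')-B(t_2')\bigr)$ with no truncation. Hence $\|R(t_1')-B(t_2')\|_\infty\le m\cdot\eps'/2m=\eps'/2$, which suffices.

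\emph{At least one outside.} Here I would show the following geometric separation fact. Any point on a red arm (baseline or connecting segment) is at $\ell_\infty$-distance at least $c/m$ from every blue point, for some absolute $c$ with $c/m>\eps'/2m$, unless the red point is within $O(\eps'/m)$ of the junction corner where that arm meets the inner box. I will also prove the same statement with the colours swapped. The key computations are as follows.
\begin{itemize}
\item In the lower-right quadrant the red baseline is $y=1/2-2x$ and the blue baseline is $y=1/4-x/2$. These are strictly separated for $x>1/6$, with the gap growing linearly.
\item The upper-left quadrant is symmetric.
\item The connecting segments near $(1/6,1/6)$ are explicit lines, so their distances to one another and to the inner box can be computed by hand. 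For example, the red segment ends at $(1/6-1/m,1/6-1/m)$, while the blue segments touch the box only at $(1/6\pm1/m,\,1/6\mp1/m)$.
\end{itemize}

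\emph{Reduction to the inner case via truncation.} The truncation $\mathsf T$ handles the exceptional sub-case. If $t_1$ lies just outside the inner window, then $t_1'\in\{0,1\}$, and $\widehat R(t_1)$ is close to the corner $(1/6,1/6)+(1/m)R(t_1')$. That corner is itself a point of the inner red curve. I would then replace $t_1$ by the window endpoint, paying an additive error absorbed by the $\eps'/2$ slack, and apply the both-inner argument. The same applies to $t_2$.

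\textbf{Expected difficulty.} The delicate part is making the case analysis quantitatively tight. I need to check that every ``near miss'' between an arm and the other curve forces proximity to a junction corner, with distance at most $\eps'/2m$ after rescaling by $m$. I would first attempt this by projecting onto the direction $(1,1)$ and its orthogonal, where the red and blue arms separate cleanly.
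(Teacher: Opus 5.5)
Your proposal is correct and is the same direct verification from the construction that the paper leaves implicit (it only asserts that the claim easily follows). For item 4 the constants close exactly. In coordinates scaled by $m$ around $(1/6,1/6)$, a connecting-segment point such as $(-1,-1)+3\sigma(1,-1)$ or $(1,-1)+\tau(3,-1)$ has $\ell_\infty$-distance to the inner box equal to its distance to the junction corner, the baselines are at distance at least $3$ from the box, and arms of opposite colours are at distance at least $1$ from each other. Hence a single snap to a corner costs at most $\eps'/2$, the both-outside case cannot occur, and this is the precise form your $O(\eps'/m)$ separation statement needs.
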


\paragraph{Construction of the valuation function.}
We construct three agents who all have the same valuation function $v$ over the cake. Let $N = \lceil 8mL/\eps' \rceil$ and $D$ be a $1/N$-fine discretization of $[0,1]$, i.e., subsequent points in $D$ are at distance $1/N$ of each other. For $i=1,2$, we let $\widehat{R}_i(t)$ denote the $i$th component of the curve, i.e., $\widehat{R}(t) = (\widehat{R}_1(t), \widehat{R}_2(t))$, and similarly $\widehat{B}(t) = (\widehat{B}_1(t), \widehat{B}_2(t))$.

We begin by defining $v$ for points lying on the discrete grid $D$. Namely,
\begin{itemize}
\item if $a=0$ and $b \in D \cap [2/9,4/9]$, then $v(0,b) = \widehat{R}_1(b) + \widehat{R}_2(b)$,
\item if $a \in D \cap [5/9,7/9]$ and $b = 1$, then $v(a,1) = \widehat{B}_1(a) + \widehat{B}_2(a)$,
\item if $a \in D \cap [2/9,4/9]$ and $b \in D \cap [5/9,7/9]$, then $v(a,b) = \widehat{R}_1(a) + \widehat{B}_2(b)$,
\item otherwise, for any other $a, b \in D$ with $a \leq b$, let $v(a,b) = b-a$.
\end{itemize}
Then, we obtain a continuous valuation function defined for all
$a,b \in [0,1]$ with $a \leq b$ by taking the piecewise linear
interpolation (see,
e.g.,~\cite[Section~5.1]{HollenderR23-cake-four}). In particular, note
that $v$ is $O(N)$-Lipschitz-continuous. At a high level, the
valuation function encodes information from the $\JordanCurve$
instance when the cuts lie in the special regions illustrated in
\cref{fig:interval}.

\paragraph{Correctness.}
The following claim says that the valuation function always outputs a value that is relatively close to the length of the interval.

\begin{claim}\label{clm:valuations-approx-length}
For any $a,b \in [0,1]$ with $a \leq b$ we have $|v(a,b) - (b-a)| \leq 16/m$.
\end{claim}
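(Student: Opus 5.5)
The plan is to first prove the bound on the grid $D \times D$ and then extend it to arbitrary $a \leq b$ through the interpolation. The extension is the easy part. The map $(a,b) \mapsto b-a$ is linear, and the piecewise linear interpolation of \cite[Section~5.1]{HollenderR23-cake-four} writes $v(a,b)$ as a convex combination of grid values $v(a',b')$ at the vertices of the cell containing $(a,b)$. The same convex combination of the values $b'-a'$ reproduces $b-a$ exactly. Hence $|v(a,b)-(b-a)|$ is at most the largest grid error among those vertices, and it suffices to show $|v(a,b)-(b-a)| \le 16/m$ for $a,b \in D$.

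One detail needs care here. The interpolation may use vertices with $a'>b'$, where $v(a',b')=0$ by convention and not $b'-a'$. Near the diagonal we have $|b'-a'| \le 2/N$, and $2/N \le \eps'/(4mL)$ is far below $16/m$. So such vertices contribute at most an extra $O(1/N)$, which is harmless. If the convention for the interpolation near the diagonal differs, I would handle it in the same way, using only that the cell has diameter $O(1/N)$.

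On the grid we go case by case through the definition of $v$. In the ``otherwise'' case the error is zero. The three remaining cases use \cref{clm:modif-properties}, items 2 and 3.

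\textbf{Case $a=0$, $b \in [2/9,4/9]$.} Here $v(0,b)=\widehat R_1(b)+\widehat R_2(b)$. The reference curve gives $(1/2-b)+(-1/2+2b)=b$, and each coordinate deviates by at most $8/m$. So the error is at most $16/m$.

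\textbf{Case $a\in[5/9,7/9]$, $b=1$.} Similarly, $(3/2-2a)+(-1/2+a)=1-a$, with error at most $16/m$.

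\textbf{Case $a\in[2/9,4/9]$, $b\in[5/9,7/9]$.} We have $v(a,b)=\widehat R_1(a)+\widehat B_2(b)$, and the reference gives $(1/2-a)+(-1/2+b)=b-a$. Again each of the two terms is off by at most $8/m$, so the error is at most $16/m$.

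The main thing to verify carefully is that the interpolation really reproduces the linear function $b-a$ exactly and treats the boundary cells correctly. This covers the cells meeting the diagonal, and also the cells straddling the boundary between the special regions and the ``otherwise'' region. The straddling cells cause no problem, since every vertex individually satisfies the $16/m$ bound.
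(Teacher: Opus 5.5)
Your proof is correct and matches the paper's. Both reduce to grid points, using that $b-a$ is linear and that the interpolation is a convex combination of vertex values, and then check the three special cases via items 2 and 3 of \cref{clm:modif-properties}. Your extra care about cells meeting the diagonal is a detail the paper skips. Note that it yields the bound $\max\{16/m,\, 2/N\} = 16/m$ rather than an additive extra term, and with the standard triangulation, where the diagonal is a union of simplex edges, the issue does not arise at all.
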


\begin{proof}
Since $v$ is obtained by piecewise linear interpolation over the grid $D$, and since $(a,b) \mapsto b-a$ is a linear function, it suffices to show that the statement holds for all $a,b \in D$ with $a \leq b$.
\begin{itemize}
\item If $a=0$ and $b \in D \cap [2/9,4/9]$, then $v(0,b) = \widehat{R}_1(b) + \widehat{R}_2(b)$, and by property 2 of \cref{clm:modif-properties}, we have $|\widehat{R}_1(b) + \widehat{R}_2(b) - b| \leq 16/m$, which implies $|v(0,b) - (b-0)| \leq 16/m$.
\item If $a \in D \cap [5/9,7/9]$ and $b = 1$, then $v(a,1) = \widehat{B}_1(a) + \widehat{B}_2(a)$, and by property 3 of \cref{clm:modif-properties}, we have $|\widehat{B}_1(a) + \widehat{B}_2(a) - (1-a)| \leq 16/m$, and thus $|v(a,1) - (1-a)| \leq 16/m$.
\item If $a \in D \cap [2/9,4/9]$ and $b \in D \cap [5/9,7/9]$, then $v(a,b) = \widehat{R}_1(a) + \widehat{B}_2(b)$, and by properties 2 and 3 of \cref{clm:modif-properties}, we have $|\widehat{R}_1(a) + \widehat{B}_2(b) - (b-a)| \leq 16/m$, i.e., $|v(a,b) - (b-a)| \leq 16/m$, as desired.
\item Finally, for any other $a, b \in D$ with $a \leq b$, we have $v(a,b) = b-a$, so the statement holds trivially.
\end{itemize}
We have thus shown that the statement holds for all $a,b$ on the grid, and it follows that it holds for all $a,b \in [0,1]$ with $a \leq b$.
\end{proof}

\begin{figure}
  \centering
  \includegraphics{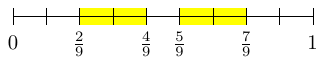}
  \caption{Highlighted are the two regions of the valuation function
    that encode information from the~$\JordanCurve$ instance}
  \label{fig:interval}
\end{figure}

Set $\eps = \eps'/4m$. The following claim states that at any $\eps$-envy-free division the cuts have to lie close to $(1/3,2/3)$.

\begin{claim}\label{clm:cuts-in-regions}
Any $\eps$-envy-free division $(c_1,c_2)$ satisfies $c_1 \in [1/3 - 23/m, 1/3 + 23/m]$ and $c_2 \in [2/3 - 23/m, 2/3 + 23/m]$.
\end{claim}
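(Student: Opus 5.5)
Write $c_0 = 0$ and $c_3 = 1$. The three pieces are then $[c_0,c_1]$, $[c_1,c_2]$ and $[c_2,c_3]$, with lengths $\ell_1 = c_1$, $\ell_2 = c_2 - c_1$ and $\ell_3 = 1-c_2$. These lengths are nonnegative and sum to $1$. The plan is to combine \cref{clm:valuations-approx-length} with the fact that all agents share the same valuation $v$: an $\eps$-envy-free division must then give pieces of almost equal length, and equal lengths force the cuts to $(1/3,2/3)$.

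First, we compare the piece values. Since the agents are identical, any two pieces $A_j$ and $A_k$ must satisfy $|v(A_j) - v(A_k)| \le \eps$. If, say, $v(A_k) > v(A_j) + \eps$, then the agent holding $A_j$ would envy $A_k$ by more than $\eps$. This holds no matter how the pieces are assigned to agents, because all agents value every piece the same way.

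Next, we convert this into a bound on lengths. By \cref{clm:valuations-approx-length}, each piece satisfies $|v(A_j) - \ell_j| \le 16/m$. Combining with the previous step gives, for all $j,k$,
\[
|\ell_j - \ell_k| \le \eps + 32/m .
\]
Because $\eps = \eps'/4m < 1/m$, this yields $|\ell_j - \ell_k| < 33/m$.

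Finally, we solve for the cuts. Using $\ell_1 + \ell_2 + \ell_3 = 1$,
\[
|\ell_1 - 1/3| = \tfrac13\,\bigl|(\ell_1-\ell_2) + (\ell_1-\ell_3)\bigr| \le \tfrac{2}{3}\cdot 33/m = 22/m ,
\]
so $c_1 \in [1/3 - 22/m,\, 1/3 + 22/m]$. The same computation for $\ell_3$ gives $|\ell_3 - 1/3| \le 22/m$, and therefore $|c_2 - 2/3| = |\ell_3 - 1/3| \le 22/m$. Both bounds are inside the stated $23/m$ margin.

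The only point needing care is the first step: envy-freeness has to be turned into near-equality of all three values regardless of the allocation, and this uses that the agents are identical. The constants follow directly from $\eps < 1/m$.
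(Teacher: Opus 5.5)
Your proof is correct and uses the same two ingredients as the paper: \cref{clm:valuations-approx-length}, and the fact that with identical valuations $\eps$-envy-freeness forces all three piece values to lie within $\eps$ of each other. The paper proves one case by contradiction through a chain of inequalities and calls the other bounds analogous, whereas your identity $3\ell_1 - 1 = (\ell_1-\ell_2)+(\ell_1-\ell_3)$ handles all four bounds at once and gives the slightly sharper margin $22/m$.
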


\begin{proof}
Assume towards a contradiction that $c_1 \leq 1/3 - 23/m$. Then, by \cref{clm:valuations-approx-length} we obtain that $v(0,c_1) \leq 1/3 - 23/m + 16/m = 1/3 - 7/m$. Since $(c_1,c_2)$ is $\eps$-envy-free, it follows that $v(c_2,1) \leq v(0,c_1) + \eps \leq 1/3 - 7/m + 1/m = 1/3 - 6/m$, where we used $\eps \leq 1/m$. By applying \cref{clm:valuations-approx-length} again, we obtain that $1 - c_2 \leq 1/3 - 6/m + 16/m$, i.e., $c_2 \geq 2/3 - 10/m$. Thus, we have that $c_2 - c_1 \geq 1/3 - 10/m + 23/m = 1/3 + 13/m$, which by \cref{clm:valuations-approx-length} implies that $v(c_1,c_2) \geq 1/3 + 13/m - 16/m = 1/3 - 3/m > 1/3 - 7/m + \eps \geq v(0,c_1) + \eps$, where we used $\eps \leq 1/m$ again. This is a contradiction to the fact that $(c_1,c_2)$ is an $\eps$-envy-free division.

The other bounds can be proved in a completely analogous manner.
\end{proof}

Finally, the following lemma states that we can recover a solution of the original problem we reduced from.

\begin{lemma}
Any $\eps$-envy-free division $(c_1,c_2)$ yields a solution to the original $\JordanCurve$ instance.
\end{lemma}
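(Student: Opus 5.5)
The plan is to invoke \cref{clm:cuts-in-regions} to localize the cuts. Then I will show that, in that region, the interpolated valuation agrees up to $O(L/N)$ with the ``ideal'' formulas from the high-level idea. Envy-freeness then forces $\widehat{R}(c_1)\approx \widehat{B}(c_2)$, and property 4 of \cref{clm:modif-properties} converts this into a solution of the original instance.

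First, by \cref{clm:cuts-in-regions} we have $c_1 \in [1/3-23/m,1/3+23/m]$ and $c_2 \in [2/3-23/m,2/3+23/m]$. With $m=1000$ these windows lie strictly inside $[2/9,4/9]$ and $[5/9,7/9]$ respectively, at distance more than $1/N$ from their boundaries.

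Next I compare the interpolated values to the ideal ones. The grid cells containing $(0,c_1)$, $(c_1,c_2)$ and $(c_2,1)$ have all their corners in the special regions. At those corners, $v$ equals $\widehat{R}_1(b)+\widehat{R}_2(b)$, $\widehat{R}_1(a)+\widehat{B}_2(b)$, and $\widehat{B}_1(a)+\widehat{B}_2(a)$ respectively. A piecewise linear interpolation is a convex combination of corner values. Each corner is within $\ell_\infty$-distance $1/N$ of the actual point, and by property 1 of \cref{clm:modif-properties} each of the relevant functions is $L/2$-Lipschitz. So each sum of two coordinates moves by at most $L/N$. This gives
\[
|v(0,c_1)-(\widehat{R}_1(c_1)+\widehat{R}_2(c_1))|\le L/N,\quad |v(c_1,c_2)-(\widehat{R}_1(c_1)+\widehat{B}_2(c_2))|\le L/N,
\]
and the analogous bound for $v(c_2,1)$. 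For $v(0,c_1)$ one should check that the triangulation cell really has both corners with $a=0$; since $a=0$ lies on the grid, this holds. The main technical obstacle is exactly this verification: that the interpolation scheme of \cite{HollenderR23-cake-four} only uses corners of the cell containing the point, and that these corners are all covered by the special-case definitions. The localization from \cref{clm:cuts-in-regions} is what ensures this.

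Finally, all three agents share $v$, so $\eps$-envy-freeness gives pairwise differences of at most $\eps$ among the three piece values. Comparing the first and middle pieces gives $|\widehat{R}_2(c_1)-\widehat{B}_2(c_2)|\le \eps+2L/N$. Comparing the middle and last pieces gives $|\widehat{R}_1(c_1)-\widehat{B}_1(c_2)|\le \eps+2L/N$. With $\eps=\eps'/4m$ and $N\ge 8mL/\eps'$, we get $2L/N\le \eps'/4m$, so $\|\widehat{R}(c_1)-\widehat{B}(c_2)\|_\infty\le \eps'/2m$. Property 4 of \cref{clm:modif-properties} then yields timestamps $t_1',t_2'$ with $\|R(t_1')-B(t_2')\|_\infty\le\eps'$, and these are computed from $(c_1,c_2)$ in polynomial time. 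This is a solution of the continuous \JordanCurve instance, which translates back to the discrete one.
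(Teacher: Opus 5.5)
Your proof is correct and follows the same route as the paper. You localize the cuts via \cref{clm:cuts-in-regions} and use that $v$ there interpolates the $L$-Lipschitz formulas. Comparing pieces pairwise then gives $\|\widehat{R}(c_1)-\widehat{B}(c_2)\|_\infty \le \eps + 2L/N \le \eps'/2m$, and you finish with property~4 of \cref{clm:modif-properties}. The only difference is cosmetic. The paper first rounds $(c_1,c_2)$ to the nearest grid points, uses Lipschitz continuity of the interpolated $v$ to show this rounded division is $(\eps+2L/N)$-envy-free, and reads off exact grid values there. You instead bound the interpolation error directly at $(c_1,c_2)$ and apply property~4 to the cuts themselves.
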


\begin{proof}
We begin with the following observation. The functions $v(0,a)$, $v(a,b)$, and $v(b,1)$ are $L$-Lipschitz-continuous when restricted to $(a,b) \in [1/3 - 50/m, 1/3 + 50/m] \times [2/3 - 50/m, 2/3 + 50/m]$. Indeed, since $m = 1000$, we have that $[1/3 - 50/m, 1/3 + 50/m] \subseteq [2/9, 4/9]$, and thus $v(0,a)$ is just a linear interpolation of the $L$-Lipschitz-continuous function $a \mapsto \widehat{R}_1(a) + \widehat{R}_2(a)$, which implies that it is itself also $L$-Lipschitz-continuous. Analogous arguments show that $(a,b) \mapsto v(a,b)$ and $b \mapsto v(b,1)$ are also $L$-Lipschitz-continuous over $(a,b) \in [1/3 - 50/m, 1/3 + 50/m] \times [2/3 - 50/m, 2/3 + 50/m]$.

Now, by \cref{clm:cuts-in-regions} we know that $(c_1,c_2)$ lie well within $[1/3 - 50/m, 1/3 + 50/m] \times [2/3 - 50/m, 2/3 + 50/m]$. Let $t_1 \in D$ denote the closest grid point to $c_1$, and $t_2 \in D$ denote the closest grid point to $c_2$ (break ties arbitrarily). In particular, $|c_1-t_1| \leq 1/N$ and $|c_2-t_2| \leq 1/N$. Then, by the $L$-Lipschitz continuity of $v$ mentioned in the previous paragraph, it follows that the division $(t_1,t_2)$ is $(\eps + 2L/N)$-envy-free.

Since $t_1 \in D \cap [2/9,4/9]$ and $t_2 \in D \cap [5/9,7/9]$, we have that $v(0,t_1) = \widehat{R}_1(t_1) + \widehat{R}_2(t_1)$ and $v(t_1,t_2) = \widehat{R}_1(t_1) + \widehat{B}_2(t_2)$. As a result, the fact that $|v(0,t_1) - v(t_1,t_2)| \leq \eps + 2L/N$ implies $|\widehat{R}_2(t_1) - \widehat{B}_2(t_2)| \leq \eps + 2L/N$. Similarly, since we also have $v(t_2,1) = \widehat{B}_1(t_2) + \widehat{B}_2(t_2)$, the fact that $|v(t_2,1) - v(t_1,t_2)| \leq \eps + 2L/N$ implies $|\widehat{R}_1(t_1) - \widehat{B}_1(t_2)| \leq \eps + 2L/N$. Finally, since $\eps + 2L/N \leq \eps'/4m + \eps'/4m = \eps'/2m$, by property 4 of \cref{clm:modif-properties}, we obtain an $\eps'$-approximate solution to the original $\JordanCurve$ instance.
\end{proof}

\begin{remark}\label{rem:query-lower-bound}
The reduction in \cref{thm:JC2CC} yields \cref{thm:cake-hard} as follows. The \UEOPL-hardness follows immediately from \cref{theorem:ueopl_leq_jc}. For the query lower bound, \cref{theorem:ueopl_leq_jc} implies that we need $\poly(D)$ queries to solve the discrete version of $\JordanCurve$ over a $D \times D$ grid with timescale $M = O(D)$. Continuous interpolation of the discrete instance yields a continuous instance where the curves $R, B: [0,1] \to [-1,1]^2$ are $O(1)$-Lipschitz-continuous and where we are looking for timestamps where the two curves are at distance at most $\varepsilon' = \Theta(1/D)$ of each other. This implies a $\poly(1/\varepsilon')$ query lower bound for the continuous version of $\JordanCurve$ with a $O(1)$ Lipschitz constant. Finally, the reduction in \cref{thm:JC2CC} produces a cake-cutting instance with $\varepsilon = \Theta(\varepsilon')$ and where the valuation function is $O(1/\varepsilon')$-Lipschitz-continuous. After normalization of the Lipschitz constant to $1$, we have $\varepsilon = \Theta(\varepsilon'^2)$. Thus, we obtain a query lower bound of $\poly(1/\varepsilon)$ for cake-cutting.
\end{remark}

\section{A Lower Bound for the $\JordanCurve$ Problem}

In this section we show how to reduce a \ueopl instance to an instance of the Jordan Curve problem. In other words we show that
the Jordan Curve problem is $\UEOPL$-hard.

\begin{theorem}\label{theorem:ueopl_leq_jc}
The $\JordanCurve$ problem over a $D \times D$ grid is $\UEOPL$-hard even when the timescale $M$ is as small as $M = 2D + D^\varepsilon$ for any constant $\varepsilon > 0$.
\end{theorem}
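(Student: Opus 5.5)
The plan is a polynomial-time black-box reduction from the grid version of $\ueopl$ (with the normalizations of \cite{Goos24_collapse}: consistent circuits, $S(N,j)=\bot$ for all $j$, and $(1,1)$ active) to $\JordanCurve$ with $M=\poly(D)$, i.e.\ $M=D^C$ for a constant $C$. The regime $M=2D+D^{\eps}$ then follows from the padding argument in \cref{sec:jordan-curve}.

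\textbf{Step 1: what the curves should encode.} A $\ueopl$ instance under these normalizations has a unique line starting at $(1,1)$. This line advances one column per step and never ends properly. So in the normalized instance every solution is either a proper sink of this line, a proper source of some other line, or two parallel active nodes in the same column. I would make the blue curve a \emph{trivial} curve, e.g.\ going along the left and bottom boundary of a thin frame. The red curve is then forced to cross it only where the $\ueopl$ structure breaks. More concretely, I intend an ``embedding of paths into a plane'': blow up each node $(i,j)$ of the $N\times N$ grid into a gadget block of side $\poly(n)$ in a $D\times D$ grid, $D=\poly(N)$.

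\textbf{Step 2: the gadgets.} The red curve starts at $(0,0)$, enters the block of $(1,1)$, and follows the line: from block $(i,j)$ it travels to block $(i+1,S(i,j))$ along a routed corridor. Because columns advance monotonically, corridors between column $i$ and column $i+1$ live in a dedicated vertical strip, and crossings between corridors are resolved by letting them pass over each other in that strip. The problem is that red cannot self-cross meaningfully, so I would instead route corridors as ``staircases'' ordered by $j$. The blue curve goes from $(0,D-1)$ to $(D-1,0)$ and snakes through every column strip. In each strip it detours around exactly those corridor endpoints that are locally consistent, i.e.\ nodes that are active with a valid predecessor. Blue is computable locally because whether $(i,j)$ is active and has a consistent predecessor is computable from $S,P$ at $(i,j)$ and neighbours. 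The intended invariant is this: blue avoids red exactly along a correctly formed segment of the line, and any crossing point lies inside a gadget of a node witnessing a proper sink, proper source or parallel lines. The red curve cannot simply terminate at a sink, so at the end of the line it continues along a fixed escape route to $(D-1,D-1)$. Blue is designed so that this escape route is only reachable by crossing blue, which guarantees a crossing exists.

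\textbf{Step 3: decoding, and the main obstacle.} For decoding, a crossing $(t_1,t_2)$ lets us compute the gadget block containing $R(t_1)$, and from it read off a node $(i,j)$ or a pair $(i,j),(i,j')$ in the same strip. We then verify which $\ueopl$ solution type it is. The crux, and what I expect to be hardest, is the planar routing. Only the \emph{single} line from $(1,1)$ should be traced by red, yet blue must be defined locally without knowing where red goes, and other lines, parallel ones in particular, must not create spurious crossings. I would try to exploit uniqueness: parallel active nodes in one column are themselves solutions. So blue can be drawn assuming each column has at most one active node, and wherever this assumption fails the local gadget is designed to produce a crossing that decodes to a parallel-lines solution. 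Finally I would check that the total length of red is $M\le N\cdot\poly(N)=D^{C}$ and that both circuits have polynomial size.
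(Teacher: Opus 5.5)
\paragraph{There is a genuine gap in Step~2.} Your outer structure (a polynomial reduction with $M=\poly(D)$, followed by the padding argument of \cref{sec:jordan-curve}) matches the paper. The core construction fails at the point where the red curve ``follows the line: from block $(i,j)$ it travels to block $(i+1,S(i,j))$''.

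In $\JordanCurve$ the red curve is a polynomial-size circuit that must return $R(t)$ for an \emph{arbitrary} timestamp $t$. If red traces the line from $(1,1)$ in time order, then during the $i$-th column phase $R(t)$ lies in the block of the $i$-th node of the line. Computing that node takes $i-1$ sequential evaluations of $S$, i.e.\ exponentially many. The escape route has the same problem, since it starts at an unknown place at an unknown time.

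This is not an implementation detail. An efficiently computable $R$ of this form would let you find the sink by binary search over $t$ for the moment red stops following the line, so the reduction would already solve the $\ueopl$ instance. You correctly insist that blue be defined locally, but red is subject to exactly the same constraint.

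Beyond this, the planar gadgets, which you yourself call the crux, are never specified. Step~1's ``trivial'' blue curve also conflicts with Step~2's snaking blue curve. So there is no argument that every crossing decodes to a solution. Red self-crossings are harmless in $\JordanCurve$, and some crossing always exists. What must be proved is that crossings occur \emph{only} at solutions.

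\paragraph{The missing idea: neither curve traces the line in time.} In the paper both curves run along fixed baselines, red on row $y_R=2D/3$ and blue on row $y_B=D/3$. They pass through checkpoints $C^R_{i,j}$ and $C^B_{i,j}$ for \emph{every} node in lexicographic order, each node owning a fixed time slot $\cT_{i,j}$ of length $4D$. Hence $R(t)$ and $B(t)$ depend only on $S,P$ evaluated at the node owning $t$ and its immediate neighbours.

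Nodes alternate between the curves: odd columns are encoded on red, even columns on blue. If $(i,j)$ is active and has both a predecessor and a successor, its curve leaves its baseline during $\cT_{i,j}$. It crosses the other curve's baseline next to the predecessor's checkpoint, runs along the far side of that baseline, and crosses back next to the successor's checkpoint. The predecessor and successor are owned by the other curve, which is itself off its baseline at those two places whenever they are ordinary interior nodes of a line. Consecutive detours along a line therefore interlock like chain links without meeting.

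A red--blue crossing can then only arise in two ways. Either a detour hits a stretch of baseline whose node did not detour (a source or a sink), or the detours of two active nodes in the same column overlap (parallel lines). This is exactly \cref{lem:solutions} together with \cref{claim:no-syn-sol}.
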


Note that since $\ueopl$ on an $N \times N$ grid has a query complexity lower bound of $\poly(N)$ and the reduction in the proof of \cref{theorem:ueopl_leq_jc} is black-box, it follows that $\JordanCurve$ over a $D \times D$ grid has a query complexity lower bound of $\poly(D)$.

We prove \cref{theorem:ueopl_leq_jc} by a reduction from \ueopl to
\JordanCurve. In \cref{subsection:desc_ueopl_jc} we describe the
reduction while we prove its correctness in
\cref{subsection:correctness_ueopl_jc}.

\subsection{Description of the Reduction}
\label{subsection:desc_ueopl_jc}

Fix a $\ueopl$ instance $(S, P)$ defined on a grid of size
$N \times N$ and assume without loss of generality that it has no
syntactic solutions (if not, we may first apply the transformation
that gets rid of those solutions and then reduce to
$\JordanCurve$). The resulting $\JordanCurve$ instance~$(B,R)$ will be
defined over a grid of size~$D \times D$ for~$D \coloneqq 3N^2 + 1$
and a time-domain $\cT$ of length $M \coloneqq (N^2 + 2) \cdot 4D$. Recall from the discussion on parameters in \cref{sec:jordan-curve} that this implies hardness for the regime $M = 2D + D^\eps$ for any constant $\eps > 0$. The hardness claimed in \cref{theorem:ueopl_leq_jc} thus follows.

\newcommand{\calT}{\mathcal{T}}

\paragraph{High-level idea.} The red curve~$R$ (respectively, the blue
curve~$B$) almost always lies on row~$2D/3$ (resp., on row~$D/3$) of
the~$D\times D$ grid. Associate each node~$(i,j)$ of the \ueopl
instance with odd~$i$ with a small time interval~$\calT_{i,j}$ of the
red curve and those with even~$i$ with a small time
interval~$\calT_{i,j}$ of the blue curve such that
if~$(i,j) < (i',j')$ lexicographically (i.e., $i < i'$, or $i=i'$ and
$j < j'$), then~$\calT_{i,j}$ comes before $\calT_{i',j'}$. Furthermore, for each $(i,j)$ define a location $C^R_{i,j}$ on the default row $2D/3$ such that the locations are evenly spaced and lexicographically ordered (see \cref{figure:checkpoints}) and analogously define locations $C^B_{i,j}$ on the default row $D/3$.
For
even~$i$ and~$\calT_{i,j} = \llbracket t_0, \ldots, t_{\textrm{end}}\rrbracket$,
fix~$R(t_0) = C^R_{i,j}$ and~$R(t_{\textrm{end}}) = C^R_{i,j+1}$; similarly for odd~$i$ fix the
blue curve~$B(t_0) = C^B_{i,j}$
and~$B(t_{\textrm{end}}) = C^B_{i,j+1}$.

Unless a node~$(i,j)$ has a predecessor and a successor (let us ignore
the node~$(0,0)$ for this discussion), the corresponding curve remains
on the default row throughout~$\calT_{i,j}$. Consider a node~$(i,j)$
with a predecessor~$P(i,j)=(i_p,j_p)$ and a
successor~$S(i,j)=(i_s,j_s)$. Suppose that~$i$ is even so that~$(i,j)$
is associated with an interval~$\calT_{i,j}$ on the red curve~$R$. The
main idea is to let~$R$ pass beneath the blue curve~$B$
during~$\calT_{i,j}$: during this time interval the red curve starts
at~$C^R_{i,j}$ to first pass by~$C^B_{i_p,j_p+1}$ to then pass beneath
the blue curve to the point~$C^B_{i_s,j_s}$ and connect back
to~$C^R_{i,j+1}$. If a symmetric construction is applied to the
intervals on the blue curve~$B$ the blue curve is ``above'' the red
curve~$R$ even though on most time stamps it seems to lie beneath~$R$
(see \cref{figure:reduction} for an illustration). This construction
ensures that the red and blue curves only cross at timestamps
corresponding to solutions of the \ueopl instance. Thus, although it
is clear that the curves must cross each other at some point, locating
such a crossing is hard. In the following we make this intuition
rigorous.

\paragraph{Time and space checkpoints.} To streamline the description
of the reduction we introduce several special positions in the space
domain of the $\JordanCurve$ instance and partition its time-domain
into intervals.
Large parts of the red curve~$R$ will simply lie on
row~$y_\text{R} \coloneqq 2D/3$, whereas the blue curve~$B$ will
mostly follow row~$y_\text{B} \coloneqq D/3$.
We introduce so-called \emph{checkpoints} on the just mentioned
rows. The curves \emph{always} run through these positions independently
of the $\ueopl$ instance~$S, P$: for~$i,j \in [N]$ let
\begin{align*}
  \checkR_{i,j}
  &\coloneqq
    \big(
    3N \cdot (i-1) +
    3 \cdot (j - 1)
    ,
    \,
    y_\text{R}
    \big),
  &
    \checkR_{\text{end}}
  &\coloneqq
    (D-1,\, y_\text{R}),\\[1em]
  \checkB_{i,j}
  &\coloneqq
    \big(
    3N \cdot (i-1) +
    3 \cdot (j - 1),
    \,
    y_\text{B}
    \big),\text{~and}
  &
    \checkB_{\text{end}}
  &\coloneqq
    (D-1,\, y_\text{B}).
\end{align*}
We partition the time-domain~$\cT$ into~$N^2 + 2$ contiguous intervals
of length~$4D$ each. On the respective curve each time interval will
eventually be used to connect two consecutive checkpoints.
For~$i,j \in [N]$ the intervals are
defined by
\begin{align*}
  \cT_\text{start}
  &\coloneqq
    \big\llbracket0,
    4D - 1\big\rrbracket,\\[1em]
  \cT_{i, j}
  &\coloneqq
    \big\llbracket
    \big(
    (i- 1)\cdot N +j
    \big)
    \cdot
    4D,
    \big(
    (i - 1) \cdot N + j + 1
    \big)
    \cdot
    4D - 1
    \big\rrbracket,
    \text{~and}\\[1em]
  \cT_\text{end}
  &\coloneqq
    \big\llbracket
    (N^2 + 1)
    \cdot
    4D,
    (N^2 + 2)
    \cdot
    4D - 1
    \big\rrbracket.
\end{align*}
Finally, for ease of notation, let us denote
by~$\nextT\colon [N] \times [N] \rightarrow [N] \times [N] \cup
\{\text{end}\}$ the function defined by
\begin{align*}
  \nextT(i,j)
  &=
    \begin{cases}
      (i, j+1)
      &\text{if~$j < N$,}\\
      (i+1, 1)
      &\text{if~$i < N$ and~$j = N$, and}\\
      \text{end}
      &\text{if~$i,j = N$.}
    \end{cases}
\end{align*}

Recall that the red curve starts at position~$(0, 0)$ and finishes at
position~$(D-1, D-1)$, while the blue curve starts at~$(0,D-1)$ and
finishes at~$(D-1, 0)$. We guarantee that the curves start and end at
the correct positions and have no discontinuity by the following high
level plan summarized in \cref{figure:checkpoints}. The red curve
\begin{itemize}
\item remains throughout~$\cT_{\text{start}}$ at position~$(0, 0)$,
\item uses the interval~$\cT_{1,1}$ to move from~$(0,0)$ to
  the checkpoint~$\checkR_{1,2}$,
\item uses the other intervals~$\cT_{i,j}$ to move
  from~$\checkR_{i,j}$ to~$\checkR_{\nextT(i,j)}$, and
\item uses the final time interval~$\cT_\text{end}$ to move
  from~$\checkR_{\text{end}}$ to $(D-1, D-1)$.
\end{itemize}
Even though we promised to pass through every checkpoint the red curve
will in fact skip~$\checkR_{1,1}$. The curves will pass through all
other checkpoints, though. Similarly to the red curve, the blue curve
\begin{itemize}
\item uses the time interval~$\cT_{\text{start}}$ to move
  from~$(0,D-1)$ to~$\checkB_{1,1}$,
\item uses the intervals~$\cT_{i,j}$ to move from~$\checkB_{i,j}$
  to~$\checkB_{\nextT(i,j)}$, and
\item uses the final time interval~$\cT_\text{end}$ to move
  from~$\checkB_{\text{end}}$ to $(D-1, 0)$.
\end{itemize}

\begin{sidewaysfigure}[hp]
    \centering
    \includegraphics[scale=1]{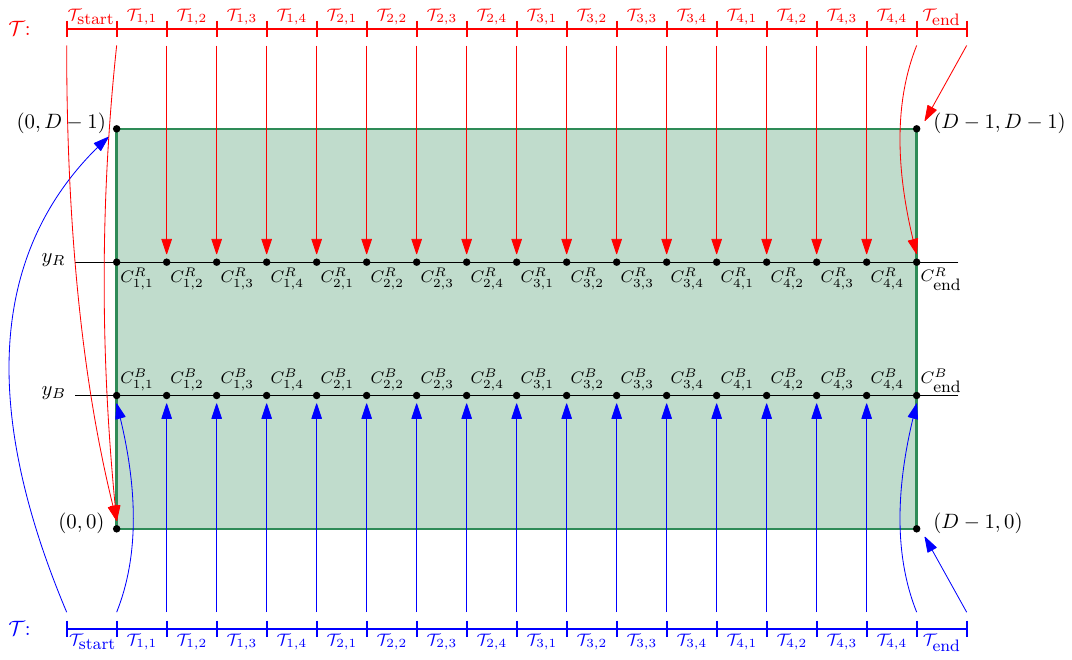}
    \vspace{1cm}
    \caption{Curve checkpoints in the reduction from a $\ueopl$
      instance of size~$4 \times 4$ to $\JordanCurve$}
    \label{figure:checkpoints}
\end{sidewaysfigure}

\paragraph{Definition of $\bm{R}$ and $\bm{B}$.} In the following we
define the two curves~$R$ and~$B$ in detail. It may be instructive to
refer to the example depicted in \cref{figure:reduction}.
Let us first define the curves~$R$ and~$B$ on the time intervals where
the curves are defined independent of the fixed~$\ueopl$
instance~$S,P$. On all these intervals the curves are straight lines
connecting two positions of the~$D \times D$ grid. Since every time
interval consists of~$4D$ points it is readily verified that any two
positions on the grid may be connected without introducing any
discontinuities.

During the time interval~$\cT_{\text{start}}$ the blue curve~$B$
connects the starting position~$(0, D-1)$ with the first
checkpoint~$\checkB_{1,1}$ by following the $0$-column. As previously
mentioned for any timestamp~$t \in \cT_{\text{start}}$ it holds
that~$R(t) = (0,0)$.  On time intervals~$\cT_{i,j}$ with even~$i$ the
red curve~$R$ connects the checkpoints~$\checkR_{i,j}$
to~$\checkR_{\nextT(i,j)}$ by following row~$y_R$. Similarly, on time
intervals~$\cT_{i,j}$ with~$i$ odd, the blue curve~$B$
connects~$\checkB_{i,j}$ with~$\checkB_{\nextT(i,j)}$ by following
row~$y_B$. On the final time interval~$\cT_{\text{end}}$ both curves
follow column~$D-1$ to either connect~$\checkB_{\text{end}}$
to~$(D-1,0)$ in case of the blue curve~$B$, or to
connect~$\checkR_{\text{end}}$ to~$(D-1, D-1)$ in case of the red
curve~$R$.

The remaining time intervals are more interesting since the curves
during these intervals depend on the $\ueopl$
instance~$(S,P)$. Consider some node~$(i,j)$ of the fixed~$\ueopl$
instance. Suppose~$i$ is odd -- the~$i$ even case is symmetric. The
basic idea is that if~$(i,j)$ has a predecessor~$P(i,j) = (i_p,j_p)$
as well as a successor~$S(i,j) = (i_s, j_s)$, then the red curve~$R$
will connect in the time interval~$\cT_{i,j}$ the
checkpoint~$\checkR_{i,j}$ to the position~$\checkB_{i_p,j_p} - (1,0)$
which is in turn connected with~$\checkB_{i_s,j_s} + (1,0)$ along
row~$y_B - 1$ to finally connect to the
checkpoint~$\checkR_{\nextT(i,j)}$. We stress that this implies the
red curve actually ventures on the blue curve baseline $y_B$.
If the symmetric construction is
used on the blue curve for the case when~$i$ is even, it is not so
hard to see that we can choose the curves such that no solution occurs
as long as there is no sink or a parallel path (see
\cref{figure:reduction}).

Let us define these paths in a bit more detail. Consider a
node~$(i,j) \neq (1,1)$ of the~$\ueopl$ instance with
odd~$i$. If~$(i,j)$ is inactive, has no predecessor, or no successor,
then~$R$ connects within the time interval~$\cT_{i,j}$ the
checkpoints~$\checkR_{i,j}$ and~$\checkR_{\nextT(i,j)}$ by following
row~$y_R$. Otherwise the node~$(i,j)$ is active, has a
successor~$S(i,j) = (i_s, j_s)$, and a
predecessor~$P(i,j) = (i_p, j_p)$. In this case the red curve~$R$
connects the checkpoint~$\checkR_{i,j}$ to the
position~$\checkR_{i,j} - (0,1)$. The curve then follows row~$y_R - 1$
to the left until it hits the column of
position~$\checkB_{i_p,j_p} - (1,1)$ to then connect along this column
with the just mentioned position. From there it goes to the right
along row~$y_B - 1$ to position~$\checkB_{i_s, j_s} + (1,-1)$. From
this position the curve follows the column upwards to row~$y_R -
1$. Once on row~$y_R -1$ it is followed to the left until
position~$\checkR_{\nextT(i,j)} - (0,1)$ to then end at the
checkpoint~$\checkR_{\nextT(i,j)}$. Since the described curve is
essentially a sub-curve of a rectangle, it is readily seen that this
path is of length at most~$4D$; the interval~$\cT_{i,j}$ suffices to
connect the mentioned positions of the grid without introducing
discontinuities.

The construction for the blue curve is symmetric: if a node~$(i,j)$
with~$i$ even is inactive, has no predecessor, or no successor,
then~$B$ connects in the time interval~$\cT_{i,j}$ the
checkpoint~$\checkB_{i,j}$ to~$\checkB_{\nextT(i,j)}$ along
row~$y_B$. Otherwise, that is, if~$(i,j)$ is active, has a
successor~$S(i,j) = (i_s, j_s)$, and a
predecessor~$P(i,j) = (i_p, j_p)$, then~$B$ connects the
checkpoint~$\checkB_{i,j}$ to position~$\checkB_{i,j} + (0,1)$ to then
follow row~$y_B + 1$ to the left until the column of
position~$\checkR_{i_p, j_p} + (-1, +1)$. Connect to the mentioned
position and follow row~$y_R+1$ to the right until
position~$\checkR_{i_s, j_s} + (1, 1)$. The curve follows the column
straight down to row~$y_B+1$ to then follow this row to the left until
it hits the column of the checkpoint~$\checkB_{\nextT(i,j)}$ to end at
the mentioned checkpoint.

It remains to define the behavior of the red curve~$R$ on the time
interval~$\cT_{1,1}$. Let~$(i_s, j_s) = S(1,1)$ be the successor
of~$(1,1)$. Note that this successor exists since we may assume that
the $\ueopl$ instance~$(S,T)$ has no syntactic solutions. The red
curve~$R$ starts at~$(0,0)$ to follow the $0$-row until it hits the
column of position~$\checkB_{i_s, j_s} + (1,0)$. It then follows this
column up to row~$y_R - 1$. The curve then follows this row until it
hits the column of the checkpoint~$\checkR_{1,2}$ to end at this
checkpoint.

\begin{sidewaysfigure}[hp]
  \centering
  \includegraphics[scale=1]{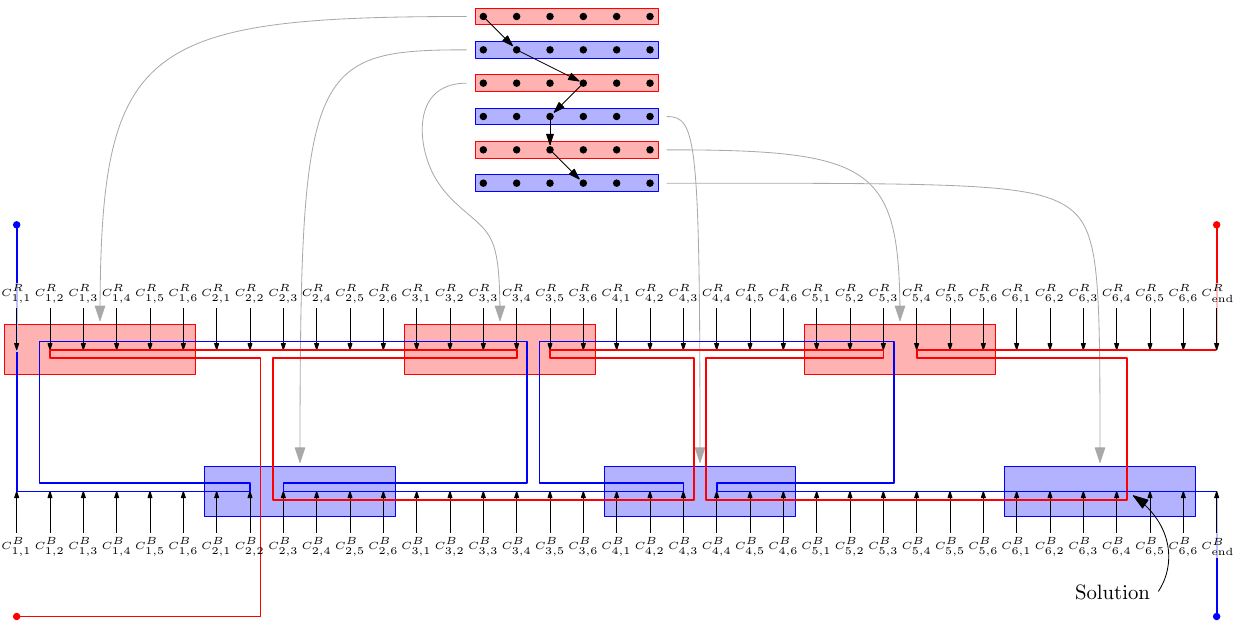}
  \vspace{1cm}
  \caption{An illustration of the reduction from $\ueopl$ to
    $\JordanCurve$}
  \label{figure:reduction}
\end{sidewaysfigure}

\subsection{Correctness of the Reduction}
\label{subsection:correctness_ueopl_jc}

Recall that the red curve is fixed on some parts irrespective of the
$\ueopl$ instance $S, P$. The same is true for the blue curve. Let us
call these time intervals \emph{dull} and denote them by
\begin{align*}
  \cT^R_d
  &\coloneqq
    \cT_{\text{start}} \cup \cT_{\text{end}} \cup
    \bigcup\nolimits_{i \,\text{even}}
    \bigcup\nolimits_{j \in [N]}
    \cT_{i,j}, \text{~and}\\[.2em]
  \cT^B_d
  &\coloneqq
    \cT_{\text{start}} \cup \cT_{\text{end}} \cup
    \bigcup\nolimits_{i \,\text{odd}}
    \bigcup\nolimits_{j \in [N]}
    \cT_{i,j}.
\end{align*}
We further call a time interval~$\cT_{i,j}$ \emph{interesting for~$R$}
if~$R$ does \emph{not} remain on row~$y_R$ throughout the time
interval; similarly we call an interval~$\cT_{i, j}$ \emph{interesting
  for~$B$} if~$B$ does not remain on row~$y_B$ during the
interval. Note that the~$\cT_{1,1}$ is always interesting for~$R$,
that dull intervals are never interesting for either curve, and that
there are intervals that are neither dull nor interesting for a curve:
intervals that correspond to inactive nodes of the~$\ueopl$
instance~$(S,P)$. In the following we show a structural result of the
resulting~$\JordanCurve$ instances.

\begin{claim}
  \label{claim:no-syn-sol}
  If~$(S, P)$ is a $\ueopl$ instance with no syntactic solution, then
  the $\JordanCurve$ instance~$(B,R)$ resulting from the above
  reduction has no syntactic solution. Furthermore, any
  solution~$(t_r, t_b)$ of the instance~$(B,R)$ satisfies that
  \begin{enumerate}
  \item $t_r \notin \cT^R_d$ and $t_b \notin \cT^B_d$, and
  \item at least one of~$t_r$ or~$t_b$ is in an interesting time
    interval.
  \end{enumerate}
\end{claim}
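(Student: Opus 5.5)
We check the syntactic part first and then the two structural items, using throughout the geometry of the construction: every point of~$R$ lies either on a dull segment or on a piece defined in some~$\cT_{i,j}$, and similarly for~$B$.

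\textbf{No syntactic solutions.} By construction, $R$ is constant at~$(0,0)$ on~$\cT_{\text{start}}$, so $R(1)=(0,0)$, and it ends~$\cT_{\text{end}}$ at~$(D-1,D-1)$. Likewise $B$ starts at~$(0,D-1)$ and ends at~$(D-1,0)$. For continuity, note that each interval has~$4D$ timestamps. On each interval the curve traverses a rectilinear path between prescribed grid points. The dull pieces are single segments, and the interesting pieces are subpaths of a rectangle of perimeter below~$4D$. Hence each path can be parametrized with unit steps and then padded by waiting at the endpoint. Consecutive intervals agree at the shared checkpoints, since interval~$\cT_{i,j}$ ends at~$\checkR_{\nextT(i,j)}$, where~$\cT_{\nextT(i,j)}$ starts. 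The special case~$\cT_{1,1}$ for~$R$ begins at~$(0,0)$, which is where~$\cT_{\text{start}}$ ends. So no discontinuity solution exists.

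\textbf{Item 1.} Take a crossing~$R(t_r)=B(t_b)$ and suppose~$t_r\in\cT^R_d$. If~$t_r\in\cT_{\text{start}}$, then~$R(t_r)=(0,0)$. The blue curve never visits~$(0,0)$: it uses column~$0$ only between rows~$D-1$ and~$y_B$, its end segment lies on column~$D-1$, and all its other points lie in rows~$\ge y_B$. This last fact holds because interesting blue pieces live on rows~$y_B+1$ through~$y_R+1$. Symmetrically, if~$t_r\in\cT_{\text{end}}$, then~$R(t_r)$ lies on column~$D-1$ at height at least~$y_R$, whereas~$B$ meets column~$D-1$ only at heights at most~$y_B$. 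Row~$y_R$ is also never reached by~$B$ at the column~$D-1$, because the column of~$\checkR_{i_s,j_s}+(1,1)$ is at most~$3N^2-2<D-1$. If~$t_r\in\cT_{i,j}$ with~$i$ even, then~$R(t_r)$ lies on row~$y_R$ between~$\checkR_{i,j}$ and~$\checkR_{\nextT(i,j)}$. On row~$y_R$, the blue curve visits only the points between~$\checkR_{i_p,j_p}+(-1,1)$ and~$\checkR_{i_s,j_s}+(1,1)$, and it does so only at row~$y_R+1$. So the only possible contact is at the vertical drops, which cross row~$y_R$ at columns of the form~$\checkR_{\cdot}\pm(1,0)$ or at~$\checkB$ columns. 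Checkpoint columns are multiples of~$3$, while these drops are at columns~$\equiv\pm1\pmod 3$. The red dull piece on row~$y_R$ passes those columns too, so here I must be careful. The blue drop from~$\checkR_{i_s,j_s}+(1,1)$ to row~$y_B+1$ crosses row~$y_R$ at column~$3k+1$ for some~$k$. I expect this to be the \textbf{main obstacle}: I would argue it happens only when the red row segment covers that column. I would then argue that the red curve's own interesting pieces for odd~$i$ lie below~$y_R$, so the only red contact at row~$y_R$ is on even dull pieces. I would resolve this by checking the blue drop column more carefully, perhaps showing it belongs to a node with odd~$i_s$ whose red piece is not dull. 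This holds if successors advance~$i$ by one and alternate parity: the blue piece of even~$i$ goes to~$\checkR_{i_s,j_s}$ with~$i_s=i+1$ odd, whose column range lies inside red's odd-$i$ interval. The symmetric argument handles~$t_b\in\cT^B_d$.

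\textbf{Item 2.} Suppose neither~$t_r$ nor~$t_b$ lies in an interesting interval. By item~1, neither lies in a dull interval, so both lie in non-dull, non-interesting intervals. Hence~$R(t_r)$ is on row~$y_R$ and~$B(t_b)$ is on row~$y_B\ne y_R$, so they cannot coincide. The only exception would be~$\cT_{1,1}$, but that interval is always interesting for~$R$.
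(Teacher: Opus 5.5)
Your syntactic part and item~2 are fine and agree with the paper. The gap is in item~1. Its decisive step is a fact the paper simply asserts: whenever one curve leaves its baseline and meets the other curve's baseline, the meeting point lies strictly between two consecutive checkpoints $\checkB_{i,j},\checkB_{\nextT(i,j)}$ (resp.\ $\checkR_{i,j},\checkR_{\nextT(i,j)}$) whose interval $\cT_{i,j}$ is \emph{not} dull for the other curve. You leave this as a plan (``I expect \dots\ I would resolve this \dots\ perhaps''). Even the plan only treats the successor-side drop of a blue excursion, at column $3N(i_s-1)+3(j_s-1)+1$. That case is correct: the column lies strictly between $\checkR_{i_s,j_s}$ and $\checkR_{\nextT(i_s,j_s)}$, with $i_s$ odd. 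Several other crossings are left unchecked:
\begin{itemize}
\item each blue excursion also has a predecessor-side vertical segment from row $y_B+1$ to row $y_R+1$, which crosses row $y_R$;
\item each red excursion has two vertical segments crossing row $y_B$;
\item the special red piece on $\cT_{1,1}$ also crosses row $y_B$;
\item the blue start column passes $(0,y_R)$, which is harmless only because $R$ skips $\checkR_{1,1}$.
\end{itemize}
Your ``symmetric argument'' for $t_b\in\cT^B_d$ inherits the same hole.

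The predecessor side is exactly where care is needed. Take the coordinates literally as given in the detailed description. The blue climb is then one column left of $\checkR_{i_p,j_p}$, at column $3N(i_p-1)+3(j_p-1)-1$. This column lies between $\checkR_{i_p,j_p}$ and the red checkpoint of the lexicographic predecessor of $(i_p,j_p)$. For $j_p=1$ that stretch belongs to $\cT_{i_p-1,N}$ with $i_p-1$ even, a dull red interval on row $y_R$. So there is a crossing with $t_r\in\cT^R_d$, and item~1 genuinely fails. For $(i_p,j_p)=(1,1)$ the column would be $-1$, off the grid. The red dive next to $\checkB_{i_p,j_p}$ has the same defect.

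To close the step you must use the placement from the paper's high-level idea: the excursion passes by $\checkB_{i_p,j_p+1}$ (resp.\ $\checkR_{i_p,j_p+1}$), i.e.\ one column left of the checkpoint indexed by $\nextT(i_p,j_p)$. That column lies strictly between the checkpoints indexed by $(i_p,j_p)$ and $\nextT(i_p,j_p)$, so it belongs to $\cT_{i_p,j_p}$. Since $i_p$ has the parity opposite to the excursion's own index, $\cT_{i_p,j_p}$ is not dull for the other curve. Once both vertical segments of every excursion, and the $\cT_{1,1}$ riser, are checked this way, your case analysis goes through. It is equivalent to the paper's split according to whether the intersection lies on row $y_B$, on row $y_R$, or on neither.
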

\begin{proof}
  Let us begin by arguing that the $\JordanCurve$ instance~$(B,R)$
  contains no syntactic solution. By construction the two curves start
  and end at the correct positions. Furthermore, since every time
  interval is of size~$4D$, there are no discontinuities. This shows
  that~$(B,R)$ contains no syntactic solution.

  Consider a solution~$(t_r, t_b)$ of the instance~$(B, R)$. It cannot
  happen that~$t_r$ or~$t_b$ is contained
  in~$\cT_{\text{start}} \cup \cT_{\text{end}}$: all these timestamps
  are independently of the~$\ueopl$ instance~$(S,P)$ mapped to
  positions no other time interval may map to.

  Suppose the position~$R(t_r) = B(t_b)$ lies on row~$y_B$. Note that
  the red curve~$R$ crosses the row~$y_B$ only at positions between
  checkpoints~$\checkB_{i,j}$ and~$\checkB_{\nextT(i,j)}$ with the
  property that the interval~$\cT_{i,j}$ is \emph{not} dull. Hence any
  solution on row~$y_B$ involves two non-dull timestamps. By symmetry
  the same holds for any solution on row~$y_R$. If the
  intersection~$R(t_r) = B(t_b)$ lies on neither row~$y_B$ nor
  row~$y_R$, then clearly both timestamps are interesting and hence
  not dull. The first item follows.

  Regarding the second item, note that if both timestamps~$t_R, t_B$
  are not interesting, then~$R(t_r)$ lies on row~$y_R$
  whereas~$B(t_b)$ lies on row~$y_B$. Since the rows~$y_R$ and~$y_B$
  are distinct this is a contradiction; the statement follows.
\end{proof}

\begin{lemma}
  \label{lem:solutions}
  Suppose that~$(t_r, t_b)$ is a solution of a $\JordanCurve$
  instance~$(B,R)$ obtained by the above reduction from a~$\ueopl$
  instance~$(S,P)$. If~$t_r \in \cT_{i,j}$ and~$t_b \in \cT_{i',j'}$,
  then the following holds.
  \begin{enumerate}
  \item If~$\cT_{i,j}$ is not interesting, then~$(i, j)$ is a source
    or a sink.
    \label{item:ij_source_sink}
  \item If~$\cT_{i',j'}$ is not interesting, then~$(i', j')$ is a
    source or a sink.
    \label{item:ij_prime_source_sink}
  \item If~$\cT_{i,j}$ and~$\cT_{i',j'}$ are both interesting, then
    there are parallel active nodes.
    \label{item:parallel}
  \end{enumerate}
\end{lemma}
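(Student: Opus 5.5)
We argue by a geometric case analysis on where the crossing point $p \coloneqq R(t_r) = B(t_b)$ lies, using \cref{claim:no-syn-sol}. By that claim, $t_r \notin \cT^R_d$ and $t_b \notin \cT^B_d$. Hence $t_r \in \cT_{i,j}$ with $i$ odd, or $(i,j)=(1,1)$, and $t_b \in \cT_{i',j'}$ with $i'$ even. Moreover, at least one of the two intervals is interesting.

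\textbf{Item 1.} Suppose $\cT_{i,j}$ is not interesting for $R$. Then $p$ lies on row $y_R$ strictly between $\checkR_{i,j}$ and $\checkR_{\nextT(i,j)}$. The only way $B$ reaches row $y_R$ is during an interesting interval $\cT_{i',j'}$ with $i'$ even, whose detour runs along row $y_R+1$ from column $\checkR_{i_p,j_p}-1$ to column $\checkR_{i_s,j_s}+1$, where $(i_p,j_p)=P(i',j')$ and $(i_s,j_s)=S(i',j')$. On row $y_R$ itself, $B$ touches only the two vertical segments at columns $\checkR_{i_p,j_p}-1$ and $\checkR_{i_s,j_s}+1$.

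With the checkpoint spacing of $3$, column $\checkR_{i_s,j_s}+1$ lies strictly between $\checkR_{i_s,j_s}$ and its next checkpoint. So $p$ on that column forces $(i,j)=(i_s,j_s)$. Thus $(i,j)$ is the successor of the active node $(i',j')$; it has an active predecessor, yet its interval is not interesting, so it is inactive or lacks a successor. This makes $(i,j)$ a proper sink.

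Similarly, column $\checkR_{i_p,j_p}-1$ lies in the segment preceding $\checkR_{i_p,j_p}$, so $p$ there gives $\nextT(i,j)=(i_p,j_p)$. I expect this is arranged so that the relevant node is the predecessor; if the off-by-one is wrong, this is where I would adjust. In this case $(i_p,j_p)$ has an active successor but its interval is uninteresting, meaning it has no predecessor, so it is a source. The main subtlety is checking that the offsets $\pm1$ place these vertical segments in the intended gaps and never on a checkpoint itself.

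For $\cT_{1,1}$, the interval is always interesting, and the construction sends $R$ up along column $\checkB_{i_s,j_s}+1$. This plays the role of the source $(1,1)$ and is handled analogously.

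\textbf{Item 2.} This is symmetric, with the roles of the colours and of rows $y_B$, $y_B-1$ exchanged.

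\textbf{Item 3.} Both intervals are interesting, so $(i,j)$ and $(i',j')$ are active nodes with predecessor and successor. Their detours are rectangle-like curves:
\begin{itemize}
\item $R$'s detour lives in rows $y_B-1$ to $y_R$;
\item $B$'s detour lives in rows $y_B$ to $y_R+1$.
\end{itemize}
An intersection must therefore occur where a vertical segment of one detour meets a horizontal segment of the other. Equivalently, the horizontal column-interval $[\checkB_{i_p,j_p}-1,\ \checkB_{i_s,j_s}+1]$ of $R$ must interleave with that of $B$ in the crossing pattern, rather than the two intervals being nested or disjoint.

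The key point is that, since $i_p=i-1$ and $i_s=i+1$, each detour spans exactly the segment from the predecessor checkpoint to the successor checkpoint. Two such spans can interleave only if the nodes $(i,j)$ and $(i',j')$ share a column index $i=i'$, which is impossible by parity. Otherwise, their predecessor and successor checkpoints must lie in the same column of the \ueopl grid at distinct rows. This yields two distinct active nodes $(k,\ell)$ and $(k,\ell')$, which are parallel lines.

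I would carry out this last step by listing the four vertical segments and comparing column positions. I expect this case analysis to be the main obstacle: one must rule out spurious crossings and extract exactly which pair of same-column active nodes is parallel.
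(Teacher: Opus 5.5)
For items 1 and 2, the paper's proof uses no geometry. For odd $i$, $\cT_{i,j}$ is uninteresting for $R$ exactly when $(i,j)$ is inactive, has no predecessor, or has no successor. With consistent circuits, ``inactive'' is the same as ``no successor'', so $(i,j)$ lacks a predecessor or a successor. You instead use the crossing to show that $(i,j)$ is a \emph{proper} source or sink. Your successor-column case does this correctly. The predecessor-column case, however, is a non sequitur. You correctly compute that the column of $\checkR_{i_p,j_p}-(1,0)$ lies in the segment with $\nextT(i,j)=(i_p,j_p)$. So $t_r$ lies in the interval of the lexicographic predecessor of $(i_p,j_p)$, yet you then argue about the interval of $(i_p,j_p)$, which is a different node. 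With the offsets exactly as written, that node can be an unrelated isolated inactive node. When $j_p=1$ (and $i_p\ge 3$) it is $(i_p-1,N)$, whose interval is even dull. So your suspicion is justified: this step only works if $B$'s predecessor-side column is moved into the gap just after $\checkR_{i_p,j_p}$, and symmetrically for $R$. One choice is through $\checkR_{i_p,j_p}+(1,1)$, which matches the high-level description where the curve passes by the checkpoint indexed $(i_p,j_p+1)$. After that fix, $(i,j)=(i_p,j_p)$ and your source argument goes through.

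For item 3, your key claim that the spans ``can interleave only if $i=i'$'' is false. For $i'=i+1$, the red span runs from column block $i-1$ to block $i+1$ and the blue span from block $i$ to block $i+2$. So they always interleave, even for two consecutive nodes of one line, which must not cross. What actually decides crossings is the width-$3$ gap in each inner track. On row $y_R-1$, $R$ covers its span except the open stretch between the columns of $\checkR_{i,j}$ and $\checkR_{\nextT(i,j)}$, and $B$ behaves symmetrically on row $y_B+1$. A long vertical of one curve can meet the other curve only on that track.
\begin{itemize}
\item If $|i-i'|\ge 3$, the spans are disjoint; this gives the paper's $|i-i'|=1$.
\item If $i'=i+1$, $B$'s predecessor column lies in $R$'s span and misses $R$'s track only if it falls in $R$'s gap, i.e.\ $P(i+1,j')=j$. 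Likewise, $R$'s successor column misses $B$'s track only if $S(i,j)=j'$. By consistency these two conditions are equivalent. So a crossing gives $j''\coloneqq P(i+1,j')\neq j$, and $(i,j)$, $(i,j'')$ are distinct active nodes (the latter by consistency), i.e.\ parallel lines.
\item The case $i'=i-1$ is symmetric.
\end{itemize}
This argument also needs the corrected offset. With the literal $-(1,0)$ offset, an interesting red node $(i,j)$ and its interesting blue successor already cross at the column of $\checkR_{i,j}-(1,0)$. The reason is that $R$'s row-$(y_R-1)$ track covers every column from that of $\checkB_{i_p,j_p}-(1,0)$ up to that of $\checkR_{i,j}$.
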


\begin{proof}
  Since~$\cT_{i,j}$ is not interesting, this implies that either the
  node~$(i,j)$ of the $\ueopl$ instance is (1) inactive, or (2) has no
  predecessor, or (3) has no successor. Since we may assume without
  loss of generality that the $\ueopl$ instance~$(S,P)$ has consistent
  circuits, it holds that a node is inactive if and only if it has no
  successor. Hence either~$(i,j)$ has no predecessor and is thus a
  source, or a sink in case it has no successor. This yields
  \cref{item:ij_source_sink}. The argument for
  \cref{item:ij_prime_source_sink} is symmetric.

  It remains to argue \cref{item:parallel}. Note that $|i - i'| = 1$,
  that is, the two nodes~$(i,j)$ and~$(i',j')$ are on consecutive rows
  in the $\ueopl$ instance. This holds because a curve on an in
  teresting interval ``connects'' the corresponding predecessor and
  successor that lie on the previous and the following row. As the two
  curves are interesting on these intervals it implies that both have
  valid predecessors and successors. Hence some of these are parallel.
\end{proof}

From \cref{lem:solutions} and \cref{claim:no-syn-sol} the correctness
of the reduction from \ueopl to \JordanCurve follows. This completes
the proof of \cref{theorem:ueopl_leq_jc}.

\small

\DeclareUrlCommand{\Doi}{\urlstyle{sf}}
\renewcommand{\path}[1]{\small\Doi{#1}}
\renewcommand{\url}[1]{\href{#1}{\small\Doi{#1}}}
\bibliographystyle{alphaurl}
\bibliography{bibliography}

\end{document}